\journal{Journal of Computer and System Sciences}
\definecolor{darkred}{rgb}{0.7,0,0}
\newcommand{\sts}{\textsc{Sequential Token Swapping}}
\newcommand{\ststs}[2]{(#1, #2)-\textsc{STS}}
\newcommand{\substs}{\textsc{Sub-STS}}
\newcommand{\lensubsts}{\mathsf{\lambda}}
\newtheorem{theorem}{Theorem}[section]
\newtheorem{lemma}[theorem]{Lemma}
\newtheorem{corollary}[theorem]{Corollary}
\newtheorem{claim}[theorem]{Claim}
\crefname{claim}{Claim}{Claims}
\newproof{proof}{Proof}
\let\doendproof\endproof
\renewcommand\endproof{~\hfill\qed\doendproof} 
\crefname{claim}{Claim}{Claims}
\begin{document}

\begin{frontmatter}

\title{Sequentially Swapping Tokens: Further on Graph Classes\tnoteref{t1,t2}}
\tnotetext[t1]{%
A preliminary version appeared in the proceedings of
the 48th International Conference on Current Trends in Theory and Practice of Computer Science (SOFSEM 2023),
Lecture Notes in Computer Science 13878 (2023) 222--235.
}
\tnotetext[t2]{%
Partially supported
by JSPS KAKENHI Grant Numbers 
JP17H01698, 
JP17K19960, 
JP18H04091, 
JP20H05793, 
JP20H05967, 
JP21K11752, 
JP21K19765, 
JP21K21283, 
JP22H00513. 
}

\author[1]{Hironori Kiya}
\ead{h-kiya@econ.kyushu-u.ac.jp}
\author[2]{Yuto Okada}
\ead{okada.yuto.b3@s.mail.nagoya-u.ac.jp}
\author[2]{Hirotaka Ono}
\ead{ono@nagoya-u.jp}
\author[2]{Yota Otachi\corref{c1}}
\ead{otachi@nagoya-u.jp}
\cortext[c1]{Corresponding author.}

\affiliation[1]{
organization={Kyushu University},
city={Fukuoka},
country={Japan}}

\affiliation[2]{
organization={Nagoya University},
city={Nagoya},
country={Japan}}

\begin{abstract}
We study the following variant of the 15 puzzle.
Given a graph and two token placements on the vertices,
we want to find a walk of the minimum length (if any exists) such that
the sequence of token swappings along the walk obtains one of the given token placements from the other one.
This problem was introduced as \textsc{Sequential Token Swapping} by Yamanaka et al.~[JGAA 2019],
who showed that the problem is intractable in general
but polynomial-time solvable for trees, complete graphs, and cycles.
In this paper, we present a polynomial-time algorithm for block-cactus graphs, which include all previously known cases.
We also present general tools for showing the hardness of the problem on restricted graph classes
such as chordal graphs and chordal bipartite graphs.
We also show that the problem is hard on grids and king's graphs,
which are the graphs corresponding to the 15 puzzle and its variant with relaxed moves.

\end{abstract}

\begin{keyword}
Sequential token swapping \sep
The (generalized) 15 puzzle \sep
Block-cactus graph \sep 
Grid graph \sep
King's graph
\end{keyword}

\end{frontmatter}

%

\section{Introduction}
\label{sec:intro}

Let $G = (V,E)$ be an undirected graph
and $f, f' \colon V \to \{1,\dots,c\}$ be colorings of $G$.\footnote{%
By a coloring, we mean a mapping from the vertex set to a color set, which is not necessarily a proper coloring.}
We call a sequence $\langle f_{1}, \dots, f_{p} \rangle$ of colorings of $G$
a \emph{swapping sequence} of length $p-1$ from $f$ to $f'$ if 
$f_{1} = f$, $f_{p} = f'$, and 
there is a walk 
$\langle w_{1}, w_{2}, \dots, w_{p} \rangle$ such that
for $2 \le i \le p$,
$f_{i}$ is obtained from $f_{i-1}$ by \emph{swapping} the colors of $w_{i-1}$ and $w_{i}$; that is,
$f_{i}(w_{i}) = f_{i-1}(w_{i-1})$, $f_{i}(w_{i-1}) = f_{i-1}(w_{i})$, and 
$f_{i}(v) = f_{i-1}(v)$ for $v \notin \{w_{i-1}, w_{i}\}$.
See \cref{fig:prob_eg}.
Now the problem can be formulated as follows.
\begin{description}
  \item[Problem:] \sts
  \item[Input:] A graph $G = (V, E)$, colorings $f, f'$ of $G$, and an integer $k$.
  \item[Question:] Is there a swapping sequence of length at most $k$ from $f$ to $f'$?
\end{description}
We assume that $f$ and $f'$ color the same number of vertices for each color
since otherwise it becomes a trivial no-instance.
We also assume that the input graph $G$ is connected
as a swapping sequence affects only one connected component.

\begin{figure}[tbh]
  \centering
  \includegraphics[width=\textwidth]{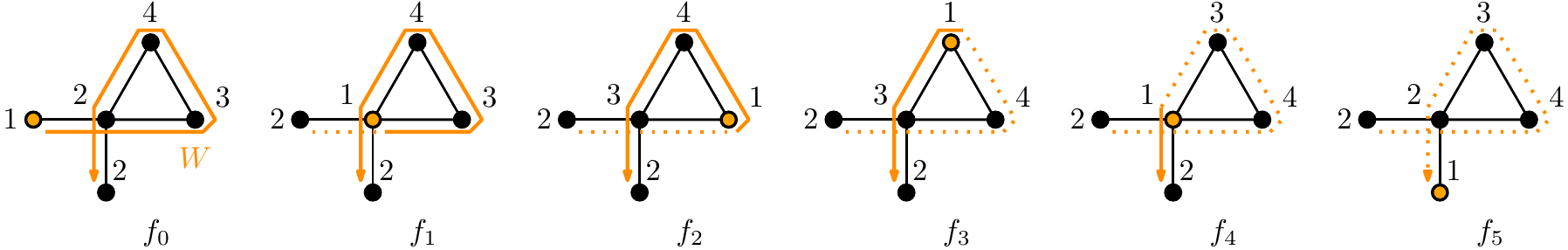}
  \caption{An example of a swapping sequence.}
  \label{fig:prob_eg}
\end{figure}

The intuition behind its name, \sts, is as follows:
we consider a coloring as an assignment of colored tokens (or pebbles) to the vertices;
we proceed along a walk; and
when we visit an edge in the walk, we swap the tokens on the endpoints.
For the ease of presentation, we often use the concept of tokens in this paper.
For example, we call the token on the first vertex of the walk the \emph{moving token}
as it will always be the one exchanged during the swapping sequence.
In other words, $f_{i}(w_{i}) = f_{1}(w_{1})$ holds for all $i$.

Yamanaka et al.~\cite{YamanakaDHKNOSS19} introduced \sts{}
as a variant of the (generalized) 15 puzzle~(\cref{fig:15-puzz}), in which the first and last vertices in a swapping sequence are given as part of input.
They showed that \sts{} is polynomial-time solvable in some restricted cases such as trees, complete graphs, and cycles.
They also showed that there is a constant $\varepsilon > 0$ such that the shortest length of a swapping sequence is NP-hard
to approximate within a factor $1 + \varepsilon$.

\paragraph{Our results}

We unify and extend the positive results in~\cite{YamanakaDHKNOSS19} by showing that 
\sts{} is polynomial-time solvable on block-cactus graphs,
which include the classes of trees, complete graphs, and cycles.
To this end, we first show that \sts{} on a graph is reducible 
to a generalized problem (called \substs) on its biconnected components, which may be of independent interest.
We then show that the generalized problem \substs{} can be solved in polynomial time on complete graphs and cycles.
As a byproduct, we also show that the generalized 15 puzzle is polynomial-time solvable on the same graph class.

To complement the positive results, 
we show negative results on several classes of graphs.
We first present two general tools for showing the NP-hardness of \sts{} on restricted graph classes.
One is for the \emph{few-color case}, where we use only a fixed number of colors,
and the other is for the \emph{colorful case}, where we use a unique color for each vertex.
The graph classes covered by the general tools include chordal graphs and chordal bipartite graphs.
We also show the hardness on grids and king's graphs that play important roles in the connection to puzzles~\cite{JohnsonS1879} and 
video games~\cite{Puzzle-and-Dragons_web}.
For them, our general tools cannot be applied, but similar ideas can be tailored.
Also for split graphs, our general tools cannot be applied,
but the NP-completeness of the few-color case follows as a corollary to some discussions for grid-like graphs.
The complexity of the colorful case on split graphs remains unsettled.

\paragraph{Related results}

\sts{} can be seen as a variant of the famous \emph{15 puzzle}.
The 15 puzzle is played on a $4 \times 4$ board with $16$ cells.
On the board, there are 15 pieces numbered from $1$ to $15$ and one vacant cell.
In each turn, we can slide an adjacent piece to the vacant cell.
The goal is to place the pieces at the right positions (see \cref{fig:15-puzz}).
By regarding the vacant cell (instead of an adjacent piece) as the piece moving in each step,
we can see the sliding process in the 15 puzzle as a swapping sequence on the $4 \times 4$ grid that starts at the vacant cell.
If we define the \emph{generalized 15 puzzle} as the same problem considered on general graphs with arbitrary colorings,
then it is almost the same as \sts{},
and the difference is whether the first and last vertices in the walk corresponding to a swapping sequence
are specified in the input (for the generalized 15 puzzle) or not (for \sts{}).

The generalized 15 puzzle has been extensively studied with respect to the ``reachability'',
i.e., under the setting where the question is the existence of a swapping sequence (not the minimum length).
It was shown by Johnson and Story~\cite{JohnsonS1879} that the reachability in the original 15 puzzle
can be decided from the parity of the total distance from the initial to final token placements.
This was later generalized further and a characterization of the reachability was given.
For example, 
it is easy to see that the characterization given by Trakultraipruk~\cite{Trakultraipruk2013ConnectivityPO} is polynomial-time testable.
On the other hand, the problem of finding a swapping sequence of the minimum length has been studied only for a couple of cases.
It was shown by Ratner and Warmuth~\cite{RatnerW90} that the generalized 15 puzzle is NP-complete on $n \times n$ grids,
in which case the problem is called the \emph{$(n^{2}-1)$ puzzle}.
A short proof for the same result was presented later by Demaine and Rudoy~\cite{DemaineR18}.

\begin{figure}[tb]
  \centering
  \includegraphics[scale=.4]{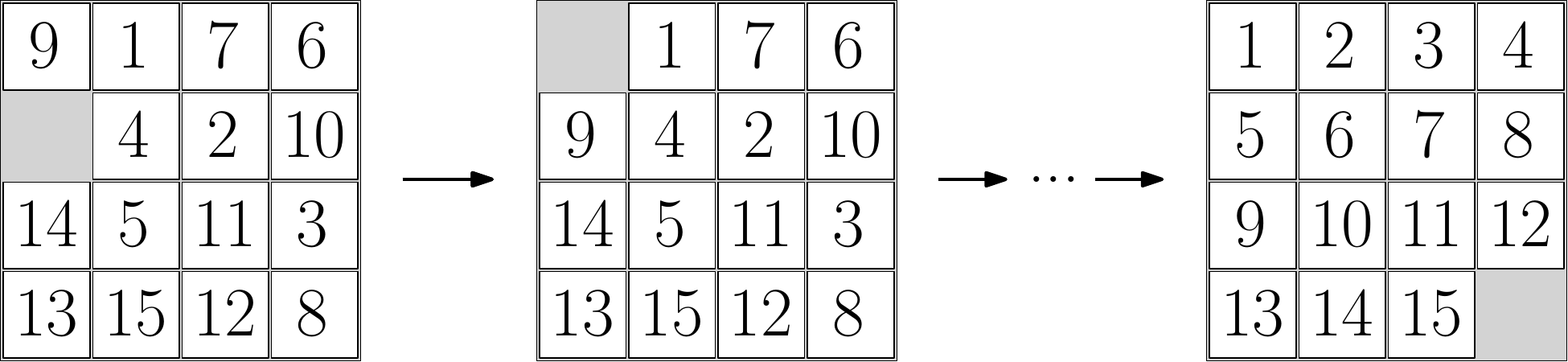}
  \caption{The 15 puzzle. Each step can be seen as a move of the vacant cell.}
  \label{fig:15-puzz}
\end{figure}

Although \sts{} is quite close to the generalized 15 puzzle,
its concept comes also from its non-sequential variant \textsc{Token Swapping},
which does not ask for the existence of a walk consistent with a swapping sequence
but allows to swap the tokens on the endpoints of any edge in each step.
The complexity of \textsc{Token Swapping} has shown to be quite different from its sequential variant.
For example, it is recently shown that \textsc{Token Swapping} is NP-complete even on trees~\cite{AichholzerDKLLMRWW22}.

The generalized 15 puzzle and \textsc{(Sequential) Token Swapping} are 
sometimes considered in \emph{combinatorial reconfiguration} as well.
See the surveys~\cite{Heuvel13,Nishimura18} for the background and related results in this context.

\section{Preliminaries}
\label{sec:pre}

We use standard terminologies for graphs (see e.g.,~\cite{Diestel16_5th} for the terms not defined here).
Let $G = (V,E)$ be an undirected graph.
For $S \subseteq V$, the subgraph of $G$ induced by $S$ is denoted by $G[S]$.
A sequence $W = \langle w_{1}, \dots, w_{|W|} \rangle$ of vertices is a \emph{walk} of length $|W|-1$ in $G$
if $\{w_{i}, w_{i+1}\} \in E$ for $1 \le i < |W|$.
A vertex of a connected graph is a \emph{cut vertex} if the removal of the vertex makes the graph disconnected.
A connected graph is \emph{biconnected} if it contains no cut vertex.
A maximal induced biconnected subgraph of a graph is called a \emph{biconnected component} of the graph.
Let $\mathcal{B}_{G}$ denote the set of biconnected components of $G$.
It is known that $\mathcal{B}_{G}$ can be computed in linear time~\cite{HopcroftT73}.

A graph is a \emph{cactus} if each biconnected component is a cycle or a 2-vertex complete graph.
A graph is a \emph{block graph} if each biconnected component is a complete graph.
A graph is a \emph{block-cactus graph} if each biconnected component is a cycle or a complete graph.
A \emph{chordal graph} is a graph with no induced cycle of length $4$ or more.
A \emph{chordal bipartite graph} is a bipartite graph with no induced cycle of length $6$ or more.
A graph is a \emph{split graph} if its vertex set can be partitioned into a clique and an independent set.

The \emph{$h \times w$ grid} has the vertex set $V = \{1,\dots,h\} \times \{1,\dots,w\}$
and the edge set $\{\{(x,y), (x',y')\} \mid (x,y), (x',y') \in V, \; |x-x'| + |y-y'| = 1\}$.
A graph is a \emph{grid} if it is the $h \times w$ grid for some integers $h$ and $w$.
A graph is a \emph{grid graph} if it is an induced subgraph of some grid.
We say that a grid graph $G = (V,E)$ is given with a \emph{grid representation}
if $V \subseteq \mathbb{Z}^{2}$ and 
$E = \{\{(x,y), (x',y')\} \mid (x,y), (x',y') \in V, \; |x-x'| + |y-y'| = 1\}$.
The \emph{$h \times w$ king's graph} is obtained from the $h \times w$ grid by adding all diagonal edges of the unit squares (4-cycles) in the grid;
that is, the vertex set is $V = \{1,\dots,h\} \times \{1,\dots,h\}$
and the edges set is $\{\{(x,y), (x',y')\} \mid (x,y), (x',y') \in V, \; \max\{|x-x'|, |y-y'|\} = 1\}$.
A graph is a \emph{king's graph} if it is the $h \times w$ king's graph for some integers $h$ and $w$.
We call a vertex $(x,y)$ of a king's graph \emph{even} if $x+y$ is even
and \emph{odd} if $x+y$ is odd.
In passing, the name of a king's graph comes from the legal moves of the king chess piece on a chessboard.

As mentioned in \cref{sec:intro}, the generalized 15 puzzle
can be seen as a variant of \sts{} with the first and last vertices specified.
In the following, we call it \ststs{$s$}{$t$}\@.
\begin{description}
  \item[Problem:] \ststs{$s$}{$t$}
  \item[Input:] A graph $G = (V, E)$, colorings $f, f'$ of $G$, $s, t \in V$, and an integer $k$.
  \item[Question:] Is there a swapping sequence of length at most $k$ from $f$ to $f'$
  such that the corresponding walk starts at $s$ and ends at $t$?
\end{description}
Note that $s$ and $t$ in an instance of \ststs{$s$}{$t$} are not necessarily distinct.

\section{Polynomial-time algorithm for block-cactus graphs}
\label{sec:algorithm}
In this section, we present a polynomial-time algorithm for \sts{} on block-cactus graphs.
We prove the following theorem.
\begin{theorem}
\label{thm:block-cactus}
\sts{} on block-cactus graphs can be solved in $O(n^{3})$ time,
where $n$ is the number of vertices.
\end{theorem}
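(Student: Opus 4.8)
The plan is to exploit the tree-like structure of block-cactus graphs through their block-cut trees. First I would build the block-cut tree $T$ of $G$, whose nodes are the biconnected components in $\mathcal{B}_{G}$ together with the cut vertices of $G$, a block node being adjacent to a cut node exactly when that cut vertex lies in the block. The moving token traces the walk underlying a swapping sequence, and projecting this walk onto $T$ shows that any solution can move between blocks only by passing through the shared cut vertices, since leaving a maximal biconnected subgraph forces a visit to a cut vertex. The guiding structural idea is that the global rearrangement decomposes into essentially independent rearrangements inside each block, coupled only through the tokens that must be transported across the cut vertices.

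To formalize this I would use the generalized single-block problem \substs{} with its length measure $\lensubsts$: for a single biconnected component $B$ we are given the two colorings restricted to $B$, together with, at each cut vertex of $B$, a description of which tokens must enter from or exit to the rest of the graph and how often the walk passes through that vertex. Solving \substs{} on $B$ then asks for the minimum internal walk length realizing both the required internal rearrangement and this boundary behavior. The reduction lemma would establish two directions. For \emph{necessity}, any swapping sequence for $G$ restricts to a family of per-block \substs{} solutions whose boundary data agree on shared cut vertices and whose lengths, together with the steps spent travelling between blocks, sum to the global length. For \emph{sufficiency}, any boundary-consistent family of per-block \substs{} solutions can be stitched along $T$ into a global swapping sequence of the same total length; here consistency means that the transport counts across each cut vertex coincide for the two blocks sharing it, and that the token carried out of one block is exactly the one demanded by the next.

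With the reduction in hand it remains to solve \substs{} on the two admissible block types. On a complete graph the internal walk is almost unconstrained, so \substs{} reduces to an analysis of permutations and can be computed directly; on a cycle I would extend the cycle algorithm of Yamanaka et al.~\cite{YamanakaDHKNOSS19} to respect the extra entry/exit constraints at the cut vertices, which is the most delicate sub-step. A dynamic program over $T$ then combines the per-block optima: processing blocks from the leaves toward the root that contains the start of the walk, we aggregate the transport demands at each cut vertex and select, for every block, the cheapest \substs{} solution consistent with the demands passed up from its children. Since the block sizes satisfy $\sum_{B} |V(B)| = O(n)$, if each block is handled in time cubic in its own size the total cost is $\sum_{B} |V(B)|^{3} = O(n^{3})$.

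The step I expect to be the main obstacle is the \emph{sufficiency} direction of the reduction: one must show that locally optimal, boundary-consistent block solutions can always be interleaved into a single global walk without spending extra length, which requires controlling the order in which the walk visits the blocks of $T$ and verifying that the moving token can pick up and drop off exactly the right token at each cut vertex on each pass. Getting this boundary bookkeeping at cut vertices correct — both for the interleaving argument and for the constrained cycle case of \substs{} — is where the real work lies.
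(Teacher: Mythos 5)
Your overall strategy---decompose along the biconnected components, define a per-block generalization \substs{}, and prove a two-directional reduction lemma---is the same as the paper's. But there are two genuine gaps, and both stem from structure you have not pinned down.

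First, your boundary data at cut vertices is far more complicated than it needs to be, and in the form you describe (``which tokens must enter from or exit to the rest of the graph and how often the walk passes through that vertex,'' with ``transport counts'' aggregated by a dynamic program over the block-cut tree) it is not clear the state space is even polynomial. The observation that collapses all of this is that the moving token---the one initially on the first vertex $w_1$ of the walk---is the \emph{only} token that ever crosses a cut vertex: $f_i(w_i)=f_1(w_1)$ for all $i$, so whenever the walk leaves a block through a cut vertex and later returns, the configuration inside that block is exactly restored. Consequently the boundary information for a block $H$ reduces to an entry vertex $s_H$, an exit vertex $t_H$, and a set $P_H$ of cut vertices that must be visited (those separating $H$ from some vertex whose color must change), and the global optimum is a plain \emph{sum} $\sum_{H\in\mathcal{B}_G}\lensubsts(H,f_H,f'_H,s_H,t_H,P_H)$ of independent per-block optima---no consistency conditions to propagate and no DP to run. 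Your ``sufficiency'' worry about interleaving is resolved by the same observation: splicing the block walks into one another at shared cut vertices costs nothing extra precisely because each block walk begins and ends with the moving token at a cut vertex and leaves the other blocks untouched.

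Second, \sts{} does not specify the endpoints of the walk, and the decomposition above only makes sense once $s$ and $t$ are fixed (they determine $s_H$, $t_H$, and the token of color $f(s)$ carried across boundaries). The paper handles this by enumerating all $n^2$ pairs $(s,t)$ and solving the resulting $(s,t)$-version in $O(n)$ time per pair---\substs{} is solved in time \emph{linear} in the block size on cycles (via a string-matching trick) and via a closed formula $|R|+\mathrm{cc}(D)-2$ on complete graphs---giving $O(n^3)$ total. Your accounting $\sum_B |V(B)|^3=O(n^3)$ omits this enumeration entirely and assumes cubic-time block subroutines that you have not supplied; with the enumeration included it would give $O(n^5)$. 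The per-block algorithms, which you defer, are where most of the technical work in the paper actually lives.
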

Note that although \cref{thm:block-cactus} is stated for \sts, which is a decision problem,
the algorithm presented below actually solves the optimization version of the problem in the same running time.
That is, it computes the minimum length of a swapping sequence from $f$ to $f'$ in $O(n^{3})$ time.

The main part of the algorithm is the subroutine for solving \ststs{$s$}{$t$}\@. 
Given that subroutine, the algorithm just tries all pairs of vertices as the first and last vertices.
In the following, we focus on this subroutine.

We show that the problem on a graph can be reduced 
to a generalized problem on its biconnected components.
Then it suffices to show that the generalized problem can be solved in polynomial time on complete graphs and cycles.
We prove this in a way similar to Yamanaka et al.~\cite{YamanakaDHKNOSS19}
but the proofs here are much more involved because of the generality of the problem.

\subsection{Reduction to a generalized problem on biconnected components}

We generalize \ststs{$s$}{$t$} by adding a subset $P$ of vertices to be visited as follows.
\begin{description}
  \item[Problem:] \substs
  \item[Input:] A graph $G = (V, E)$, colorings $f, f'$ of $G$, $s, t \in V$, and $P \subseteq V$.
  \item[Task:] Find the minimum length of a swapping sequence from $f$ to $f'$ (if any exists)
  such that the corresponding walk $W = \langle w_{1}, w_{2}, \dots, w_{|W|} \rangle$ satisfies 
  that $w_{1} = s$, $w_{|W|} = t$, and $P \subseteq \{w_{1}, w_{2}, \dots, w_{|W|}\}$.
\end{description}

Let $\lensubsts(G, f, f', s, t, P)$ denote the answer for the instance $\langle G, f, f', s, t, P \rangle$ of \substs\@.
We set it to $\infty$ if no swapping sequence from $f$ to $f'$ exists.
Note that $\lensubsts(G, f, f', s, t, \emptyset)$ is the minimum $k$
such that $\langle G, f, f', s, t, k \rangle$ is a yes-instance of \ststs{$s$}{$t$}\@.

Let $\langle G, f, f', s, t, k\rangle$ be an instance of \ststs{$s$}{$t$} and let $H$ be a biconnected component of $G$.
Let us see how a solution to this instance passes through $H$.
If $s \notin V(H)$, then the first vertex visited in $H$ is the cut vertex closest to $s$.
Similarly, if $t \notin V(H)$, then the last vertex visited in $H$ is the cut vertex closest to $t$.
Also, a cut vertex $u$ of $G$ belonging to $H$ has to be visited
if at least one vertex in $H$ is visited and there is a vertex $v \notin V(H)$ such that 
$f(v) \ne f'(v)$ and $u$ is the closest vertex in $H$ to $v$.
With these observations, 
we construct an instance $\langle H, f_H, f'_H, s_H, t_H, P_H \rangle$ of \substs{} as follows,
where $c_{v}$ is the cut vertex in $H$ that separates $v$ and $V(H)$.
\begin{itemize}
  \item Set $f_{H} = f|_{V(H)}$. If $s \notin V(H)$, then update $f_{H}$ as $f_{H}(c_{s}) \coloneqq f(s)$.
  \item Set $f'_{H} = f'|_{V(H)}$. If $t \notin V(H)$, then update $f'_{H}$ as $f'_{H}(c_{t}) \coloneqq f(s)$.
  \item Set $s_{H} = s$ if $s \in V(H)$. Otherwise, set $s_{H} = c_{s}$.
  \item Set $t_{H} = t$ if $t \in V(H)$. Otherwise, set $t_{H} = c_{t}$.
  \item Set $P_{H}$ to the set of cut vertices $c_{v}$ of $G$ belonging to $V(H)$
  such that $c_{v}$ separates $v$ and $H$ for some $v \notin V(H)$ with $f(v) \ne f'(v)$.
\end{itemize}
The following lemma says that this instance correctly captures how a solution to \ststs{$s$}{$t$} on $G$ affects $H$.

\begin{lemma}
\label{lem:sub-pd-sum}
For a graph $G$, colorings $f, f'$ of $G$, and $s,t \in V$,
\[
\lensubsts(G, f, f', s, t, \emptyset) = \sum_{H \in \mathcal{B}_{G}} \lensubsts(H, f_H, f'_H, s_H, t_H, P_H).
\]
\end{lemma}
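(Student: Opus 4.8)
The plan is to prove the identity by establishing the two inequalities separately, each of which corresponds to translating a global solution into local ones and vice versa. The core intuition is already laid out in the excerpt: the moving token starts at $s$, travels along a single walk to $t$, and whenever this walk enters a biconnected component $H$ it must do so and leave through the appropriate cut vertices; moreover the only vertices of $H$ whose color can be ``permanently'' altered from outside are the cut vertices $c_v$, and the gadget definition of $f_H, f'_H$ encodes exactly the boundary conditions the moving token imposes on $H$.

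First I would prove the direction
\[
\lensubsts(G, f, f', s, t, \emptyset) \ge \sum_{H \in \mathcal{B}_{G}} \lensubsts(H, f_H, f'_H, s_H, t_H, P_H).
\]
Given an optimal walk $W$ realizing the left-hand side, I would decompose it by restricting to each biconnected component $H$: since the biconnected components form a tree-like structure glued at cut vertices, the trace of $W$ inside each $H$ is itself a walk whose endpoints are $s_H$ and $t_H$ (the observations in the excerpt justify that the first and last vertices of $W$ seen inside $H$ are exactly $s$ or $c_s$, and $t$ or $c_t$). The key claim here is that the restriction of the swapping process to $H$ is a valid swapping sequence from $f_H$ to $f'_H$: every edge swap of $W$ lies inside a unique biconnected component, so the swaps partition across the $\mathcal{B}_G$, and the effect on colors inside $H$ matches $f_H \to f'_H$ because any color that enters or leaves $H$ does so through a cut vertex, which is precisely how $f_H, f'_H$ were redefined at $c_s, c_t$. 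One must also check the visit requirement: every cut vertex in $P_H$ is indeed traversed because a token discrepancy $f(v)\ne f'(v)$ outside $H$ forces the moving token to pass through $c_v$. Summing the lengths of the restricted walks gives at most the total length (in fact equal, since each edge belongs to exactly one component), yielding the inequality.

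For the reverse direction I would take optimal local solutions, one walk $W_H$ in each component, and splice them into a single global walk $W$ for $G$. Because the components meet only at cut vertices and the underlying block-cut structure is a tree, I can route the moving token from $s$ through the components in the order dictated by this tree, entering and leaving each $H$ at $s_H$ and $t_H$; the constraint $w_1(w_1)=f(s)$ on the moving token is consistent across components precisely because $f_H$ and $f'_H$ were seeded with the value $f(s)$ at the relevant cut vertices. I would argue that concatenating these walks produces a valid swapping sequence from $f$ to $f'$ on $G$, with total length equal to $\sum_H \lensubsts(H, f_H, f'_H, s_H, t_H, P_H)$, and that the $P_H$ conditions guarantee all necessary cut vertices are visited so that no discrepancy outside a component is left unresolved.

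The main obstacle I expect is the careful bookkeeping of \emph{consistency of the moving token across components}: I must verify that the color carried by the moving token as it exits one component matches what the next component expects as input, and that re-entering a component (the walk may visit the same $H$ more than once in principle) does not break the additivity. Handling this likely requires a structural lemma that an optimal walk visits each biconnected component in a contiguous stretch, or at least that revisits cost nothing and can be eliminated, so that the decomposition and the splicing are genuine inverses. Establishing that the optimal global walk can be assumed to traverse the block-cut tree without wasteful backtracking — and that the seeded values $f(s)$ at $c_s$ and $c_t$ correctly simulate the moving token passing through — is where the real work lies; the rest is accounting that each edge and each swap is charged to exactly one component.
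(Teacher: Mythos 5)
Your overall plan---decompose an optimal global walk into its traces on the biconnected components for one inequality, and splice optimal local walks together at cut vertices for the other---is exactly the strategy of the paper's proof (the paper merely organizes it as an induction on $|\mathcal{B}_G|$, peeling off one component $S$ containing $s$ together with the subgraphs $G_i$ hanging off its cut vertices, so that at each step only one component needs explicit treatment). Your accounting observations are also right: each edge of the walk is charged to exactly one component, and the seeded values $f_H(c_s)\coloneqq f(s)$ and $f'_H(c_t)\coloneqq f(s)$ encode the moving token crossing the cut vertex.

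The one step that would fail as stated is the structural lemma you propose to lean on at the end: that an optimal walk visits each biconnected component in a contiguous stretch, or that revisits ``cost nothing and can be eliminated.'' This is false. If, say, $s=t$ lies in a component $S$ with two cut vertices $c_1,c_2$ whose attached subgraphs both contain color discrepancies, every feasible walk must enter $G_1$, return to $S$, enter $G_2$, and return again; the visit to $S$ is necessarily non-contiguous and none of the excursions can be removed. The correct replacement---and the actual content of the paper's \cref{clm:lensubsts-ge}---is that contiguity is not needed: the maximal subwalks of $W$ inside a component all begin and end at the relevant cut vertex (except the initial and final ones, which start at $s_H$ and end at $t_H$), so they concatenate into a single walk, and this concatenation induces the same color changes inside the component as $W$ does. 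The reason is that during an excursion outside the component, only the cut vertex's token is disturbed (a foreign token is parked there and taken away again when the moving token returns), so the component's token placement at the start of the next subwalk equals its placement at the end of the previous one. Once you substitute this concatenation argument for the contiguity assumption, your proof goes through and coincides with the paper's.
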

\begin{proof}
We use induction on $|\mathcal{B}_{G}|$, the number of biconnected components of $G$.
If $|\mathcal{B}_{G}| = 1$, then the unique biconnected component is $G$ itself and thus the statement holds.
In the following, we assume that $|\mathcal{B}_{G}| = b + 1$ for some $b \ge 1$
and that the statement holds for all graphs with at most $b$ biconnected components.

Let $S \in \mathcal{B}_{G}$ be an arbitrary biconnected component that contains $s$.
Since $|\mathcal{B}_{G}| = b+1 \ge 2$, the biconnected component $S$ has at least one cut vertex.
Let $c_{1}, \dots, c_{a}$ be the cut vertices of $G$ in $H$.
Let $G_{1}, \dots, G_{a}$ be the nontrivial connected components of $G - E(S)$
such that $c_{i} \in V(G_{i})$ for each $i$ (see \cref{fig:cut-vertex}).
\begin{figure}[tb]
  \centering
  \includegraphics[scale=.7]{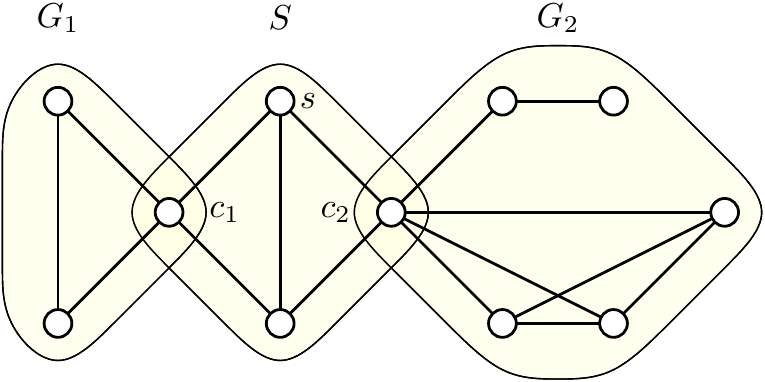}
  \caption{Cut vertices $c_{1}, c_{2}$ and the corresponding subgraphs $G_{1}, G_{2}$.}
  \label{fig:cut-vertex}
\end{figure}

For each $G_{i}$, we set $f^{(i)}$, $f^{\prime (i)}$, $s^{(i)}$, and $t^{(i)}$ as follows.
\begin{itemize}
  \item Set $f^{(i)} = f|_{V(G_{i})}$, and then update it as $f^{(i)}(c_{i}) \coloneqq f(s)$.
  \item Set $f^{\prime (i)} = f'|_{V(G_{i})}$. If $t \notin V(G_{i})$, then update it as $f^{\prime (i)}(c_{i}) \coloneqq f(s)$.
  \item Set $s^{(i)} = c_{i}$.
  \item Set $t^{(i)} = t$ if $t \in V(G_{i})$. Otherwise, set $t^{(i)} = c_{i}$.
\end{itemize}
For each biconnected component $H$ of $G_{i}$, we define $f^{(i)}_{H}$, $f^{\prime (i)}_{H}$, $s^{(i)}_{H}$, $t^{(i)}_{H}$, $P^{(i)}_{H}$ 
in the same way as $f_{H}, f'_{H}, s_{H}, t_{H}, P_{H}$.
Observe that each biconnected component of $G_{i}$ is a biconnected component of $G$ as well.
This implies that $|\mathcal{B}_{G_{i}}| \le b$ for each $i$ (as $S$ is missing).
By the induction hypothesis, the statement of the lemma holds for each $G_{i}$. 
To be more precise, it holds for each $i$ that
\begin{align}
  \lensubsts(G_{i}, f^{(i)}, f^{\prime (i)}, s^{(i)}, t^{(i)}, \emptyset)
  =
  \sum_{H \in \mathcal{B}_{G_{i}}} \lensubsts(H, f^{(i)}_{H}, f^{\prime (i)}_{H}, s^{(i)}_{H}, t^{(i)}_{H}, P^{(i)}_{H}).
  \label{eq:substs-ind-hyp}
\end{align}
We can see that for every $G_{i}$ and every $H \in \mathcal{B}_{G_{i}}$, it holds that
$f^{(i)}_{H} = f_{H}$, $f^{\prime (i)}_{H} = f'_{H}$, $s^{(i)}_{H} = s_{H}$, $t^{(i)}_{H} = t_{H}$,
and $P^{(i)}_{H} \cup \{s^{(i)}_{H}\} = P_{H}$. This implies that
\[
  \lensubsts(H, f_{H}, f'_{H}, s_{H}, t_{H}, P_{H}) = \lensubsts(H, f^{(i)}_{H}, f^{\prime (i)}_{H}, s^{(i)}_{H}, t^{(i)}_{H}, P^{(i)}_{H}),
\]
and thus by \cref{{eq:substs-ind-hyp}}, 
\[
  \lensubsts(G_{i}, f^{(i)}, f^{\prime (i)}, s^{(i)}, t^{(i)}, \emptyset)
  =
  \sum_{H \in \mathcal{B}_{G_{i}}} 
  \lensubsts(H, f_{H}, f'_{H}, s_{H}, t_{H}, P_{H}).
\]
Now, since $\mathcal{B}_{G} = \{S\} \cup \bigcup_{1 \le i \le a} \mathcal{B}_{G_{i}}$,
\begin{multline}
  \sum_{H \in \mathcal{B}_{G}} \lensubsts(H, f_H, f'_H, s_H, t_H, P_H)
  = 
  \lensubsts (S, f_{S}, f'_{S}, s_{S}, t_{S}, P_{S}) 
  +
  \sum_{i=1}^{a}\lensubsts(G_{i}, f^{(i)}, f^{\prime (i)}, s^{(i)}, t^{(i)}, \emptyset).
  \label{eq:substs}
\end{multline}

\begin{claim}
  \label{clm:lensubsts-ge}
  The following inequality holds:
  \[
    \lensubsts(G, f, f', s, t, \emptyset) \ge
    \lensubsts (S, f_{S}, f'_{S}, s_{S}, t_{S}, P_{S}) +
    \sum_{i=1}^{a} \lensubsts(G_{i}, f^{(i)}, f^{\prime (i)}, s^{(i)}, t^{(i)}, \emptyset).
  \]
\end{claim}
\begin{proof}[\cref{clm:lensubsts-ge}]
We assume that $\lensubsts(G, f, f', s, t, \emptyset) \ne \infty$ since otherwise the claim is trivially true.
Let $W = \langle w_{1}, \dots, w_{\lensubsts(G, f, f', s, t, \emptyset)+1} \rangle$
be a walk that corresponds to a swapping sequence from $f$ to $f'$ such that $w_{1} = s$ and $w_{|W|} = t$.

\paragraph{Constructing a walk in $G^{(i)}$.}
If $f^{(i)} = f^{\prime (i)}$ and $s^{(i)} = t^{(i)}$, then the trivial walk $\langle s^{(i)} \rangle$ certificates
that $\lensubsts(G_{i}, f^{(i)}, f^{\prime (i)}, s^{(i)}, t^{(i)}, \emptyset) = 0$.
Otherwise, we construct a shortest walk for each $G_{i}$ using the walk $W$.

Let $W_{1}, \dots, W_{l}$ be all maximal subwalks of $W$ appearing in this ordering
such that each $W_{j}$ contains vertices of $V(G_{i})$ only and $|W_{j}| \ge 2$.
Note that there is at least one such maximal subwalk
since the moving token needs to swap tokens in $G_{i}$ as $f^{(i)} \ne f^{\prime (i)}$ or $s^{(i)} \ne t^{(i)}$ holds.
By the definition of $G_{i}$, the cut vertex $c_{i}$ ($= s^{(i)}$)
is the unique vertex in $V(G_{i})$ that is adjacent to a vertex not in $V(G_j)$.
This implies that, for each $p < l$, $W_{p}$ is a walk from $c_{i}$ to $c_{i}$, and $W_{l}$ is a walk from $c_{i}$ to $t^{(i)}$.
Thus we can concatenate the walk $W_{1}, \dots, W_{l}$ into one walk $W^{(i)}$ from $s^{(i)}$ to $t^{(i)}$ such that
$W^{(i)} = \langle w_{1, 1}, \dots, w_{1, |W_{1}|} \, (= w_{2, 1}), \, w_{2, 2}, \dots,
  w_{2, |W_{2}|} \, (=w_{3,1}), \,  \dots, w_{l-1, |W_{l-1}|} \, (= w_{l, 1}), \dots w_{l, |W_{l}|} \rangle$,
where $w_{p, q}$ is the $q$th vertex in $W_{p}$.
Now we show that $W^{(i)}$ corresponds to a desired swapping sequence on $G_{i}$.

Observe that the moving token (walking along $W$) brings into $G_{i}$
a token $t$ not originally in $G_{i}$ when it leaves $G_{i}$ from $c_{i}$ right after visiting $w_{p, |W_{p}|}$ for some $p$.
After that, when the moving token visits $w_{p+1, |W_{p}|}$, the token $t$ leaves $G_{i}$ and 
the token placement restricted to $G_{i}$ becomes exactly the same as the one right when $w_{p, |W_{p}|}$ had been visited.
Thus, by swapping along the walk $W^{(i)}$ we can replicate the swappings in $G^{(i)}$ by the swapping sequence corresponding to $W$.
Recall that $f^{(i)}$ is the same as $f$ on $V(G_{i})$ with an exception $f^{(i)}(s^{(i)}) = f(s)$.
Since $\lensubsts(G, f, f', s, t, \emptyset) \ne \infty$, we have $f(s) = f'(t)$.
Hence, if $t \in V(G_{i})$, then $f^{\prime (i)}(t^{(i)}) = f^{\prime (i)}(t) = f'(t) = f(s)$.
Otherwise, $f^{\prime (i)}(t^{(i)}) = f^{\prime (i)}(c_{i}) = f(s)$.
Thus, $W^{(i)}$ corresponds to a desired swapping sequence of length $\sum_{p=1}^{l}(|W_{p}| - 1)$
for the instance $\langle G_{i}, f^{(i)}, f^{\prime (i)}, s^{(i)}, t^{(i)}, \emptyset \rangle$ of \substs\@.

\paragraph{Constructing a walk in $S$.}

As before, we construct a walk $W^{(S)}$ by concatenating the maximal subwalks of $W$ passing through $S$.
In almost the same way as before, we can see that the swapping sequence along $W^{(S)}$ is the desired one.
The only difference is the additional requirement for visiting the vertices in $P_{S}$.
A cut vertex $c_{i}$ belongs to $P_{S}$ if there exists a vertex $v \in V(G_{i}) \subseteq V(S)$ such that $f(v) \ne f(v')$.
The walk $W$ has to visit $c_{i}$ since it starts at $s \in V(S)$, and thus $W^{(S)}$ visits $c_{i}$.

\paragraph{The total length of the walks.}
From the discussions above, we have
\begin{multline*}
  (|W^{(S)}|-1) + \sum_{i=1}^{a} (|W^{(i)}|-1)
  \ge
  \lensubsts (S, f_{S}, f'_{S}, s_{S}, t_{S}, P_{S}) +
  \sum_{i=1}^{a} \lensubsts(G_{i}, f^{(i)}, f^{\prime (i)}, s^{(i)}, t^{(i)}, \emptyset).
\end{multline*}
Observe that each consecutive vertices $w_{i}, w_{i+1}$ in the walk $W$
contributes $1$ to exactly one of the lengths $|W^{(S)}|-1, |W^{(1)}|-1, \dots, |W^{(a)}|-1$.
Thus, $|W|-1 =  (|W^{(S)}|-1) + \sum_{i=1}^{a} (|W^{(i)}|-1)$.
Since $|W|-1 = \lensubsts(G, f, f', s, t, \emptyset)$, the claim follows.
(\textit{The end of the proof of \cref{clm:lensubsts-ge}.})
\end{proof}

\begin{claim}
  \label{clm:lensubsts-le}
  The following inequality holds:
  \[
    \lensubsts(G, f, f', s, t, \emptyset) \le
    \lensubsts (S, f_{S}, f'_{S}, s_{S}, t_{S}, P_{S}) +
    \sum_{i=1}^{a} \lensubsts(G_{i}, f^{(i)}, f^{\prime (i)}, s^{(i)}, t^{(i)}, \emptyset).
  \]
\end{claim}
\begin{proof}[\cref{clm:lensubsts-le}]
We assume that the right-hand side is not $\infty$ since otherwise the claim clearly holds.
Let $W^{(S)} = \langle w_{S, 1}, w_{S, 2}, \dots, w_{S, |W^{(S)}|} \rangle$
be a walk corresponding to a desired swapping sequence of length $\lensubsts (S, f_{S}, f'_{S}, s_{S}, t_{S}, P_{S})$.
Similarly, for each $i$, let $W^{(i)} = \langle w_{i, 1}, w_{i, 2}, \dots, w_{i, |W^{(i)}|} \rangle$
be a walk corresponding to a desired swapping sequence of length $\lensubsts(G_{i}, f^{(i)}, f^{\prime (i)}, s^{(i)}, t^{(i)}, \emptyset)$.

We first assume that $t \in V(S)$.
In this case, $s^{(i)} = t^{(i)} = c_{i}$ holds for each $G_{i}$.
If $f^{(i)} \ne f^{\prime (i)}$,  then $c_{i} \in P_{S}$ holds,
and thus there exists $l$ such that $w_{S, l} = c_{i}$.
Replacing the vertex $w_{S, l}$ in $W^{(S)}$ with the walk $W^{(i)}$,
we obtain a new walk from $s_{S}$ to $t_{S}$ as follows:
\[
 \langle w_{S, 1} \dots, w_{S, l-1}, c_{i}, w_{i, 2}, \dots, w_{i, |W^{(i)}|-1}, c_{i}, w_{S, l+1}, \dots w_{S, |W^{(S)}| }\rangle. 
\]
Since $W^{(S)}$ and $W^{(i)}$ share $c_{i}$ only, 
after applying the swapping sequence along the new walk, each vertex $v$ in $S$ has color $f'_{S}(v) = f'(v)$
and each vertex $v \in V(G_{i}) \setminus \{c_{i}\}$ has color $f^{\prime (i)}_{S}(v) = f'(v)$.
We repeat the replacement for all $G_{i}$. We call the obtained walk $W$.
The walk $W$ is a walk from $s$ ($=s_{S}$) to $t$ ($=t_{S}$) that changes the coloring from $f$ to $f'$.
Since each replacement for $G_{i}$ increases the length of $W$ by $|W^{(i)}| - 1$,
the length of $W$ can be bounded as follows:
\begin{align*}
|W| - 1
&\le |W^{(S)}| - 1 + \sum_{i=1}^{a} (|W^{(i)} - 1) \\[-2ex]
&= 
\lensubsts(S, f_{S}, f'_{S}, s_{S}, t_{S}, P_{S}) 
+ \sum_{i=1}^{a} \lensubsts(G_{i}, f^{(j)}, f^{\prime (i)}, s^{(i)}, t^{(i)}, \emptyset).
\end{align*}

We next assume that $t \notin V(S)$.
Let $j$ be the unique index such that $t \in V(G_{j})$.
Now we have, $s_{S} = s$, $t_{S} = c_{j}$, $s^{(j)} = c_{j}$, $t^{(j)} = t$,
and $s^{(i)} = t^{(i)} = c_{i}$ for each $i \ne j$.
By the same discussion as the previous case, 
we can construct a walk $W$ from $s$ to $c_{j}$ such that
the corresponding swapping sequence changes the colors of each vertex $v \notin V(G_{j})$ to $f'(v)$,
while the vertices in $V(G_{j}) \setminus \{c_{j}\}$ are left untouched.
Since the vertex of $G_{j}$ does not appear in $W$ except for $c_{j}$,
the coloring obtained by $W$ restricted to $G_{j}$ coincides with $f^{(j)}$.
Therefore, we can obtain $f'$ by attaching $W^{(j)}$ to the end of $W$
by identifying the last vertex of $W$ and the first vertex of $W^{(j)}$.
The length of the resultant walk is $|W^{(S)}| - 1 + \sum_{i=1}^{a} (|W^{(i)} - 1)$ as before.
(\textit{The end of the proof of \cref{clm:lensubsts-le}.})
\end{proof}

Now the lemma follows by \cref{eq:substs,clm:lensubsts-ge,clm:lensubsts-le}.
\end{proof}

\subsection{\substs{} on cycles}

\begin{figure}[tb]
  \centering
  \includegraphics[width=\textwidth]{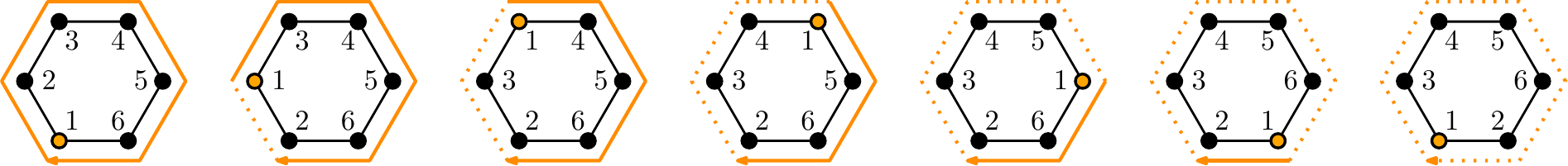}
  \caption{A swapping sequence on a cycle.}
  \label{fig:cycle_eg}
\end{figure}

\begin{lemma}
\label{lem:subpd-cycle}
\substs{} on cycles can be solved in linear time.
\end{lemma}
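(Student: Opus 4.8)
The plan is to analyze the structure of a swapping sequence on a cycle $C_n$ and reduce the optimization over all valid walks to a small number of parametrized cases that can each be evaluated in linear time. The key observation is that on a cycle there are essentially only two primitive directions of travel, clockwise and counterclockwise, and the moving token traces a walk that can be described by how far it advances in each direction. First I would fix the first vertex $s = w_1$ and the last vertex $t = w_{|W|}$ (both given in the instance) and think of the walk as a sequence of ``laps'' and partial arcs around the cycle. The crucial combinatorial fact is that after the moving token travels along the walk, every non-moving token is shifted along the cycle by a net displacement equal to (clockwise steps) minus (counterclockwise steps), taken modulo $n$; so the requirement that swapping transforms $f$ into $f'$ forces the net number of signed steps to lie in a fixed residue class determined by comparing $f$ and $f'$. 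I expect that feasibility of $f \to f'$ is characterized by whether $f'$ is a cyclic rotation of $f$ (ignoring the single moving-token slot), which is easy to test in linear time.

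Given feasibility, I would parametrize candidate optimal walks by an integer $r$ recording the net signed rotation, and separately by the pattern of back-and-forth motion needed to (a) end at $t$ and (b) cover all vertices of the mandatory set $P$. The second point is the new ingredient relative to Yamanaka et al.'s cycle algorithm for $\ststs{$s$}{$t$}$: the walk must additionally visit every vertex of $P \subseteq V$. On a cycle this is a clean constraint, since a contiguous arc swept by the walk automatically contains all the intermediate vertices; so requiring $P \subseteq \{w_1,\dots,w_{|W|}\}$ amounts to requiring that the union of arcs traversed covers $P$, and the extreme points of $P$ on the cycle (relative to $s$ and $t$) determine how far the walk must reach in each direction. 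I would therefore express the walk length as a function of the two ``turning points'' (the clockwise-most and counterclockwise-most vertices reached) together with the net rotation $r$, write the length as a closed-form expression in these parameters, and minimize over the finitely many feasible combinations.

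Concretely, the steps in order are: (1) set up coordinates on the cycle and define the net signed displacement of a walk, proving it equals the cyclic shift applied to the tokens; (2) characterize feasibility of transforming $f$ into $f'$ (including the placement of the moving token) in terms of this displacement, and compute in linear time the set of admissible net shifts $r$; (3) for each admissible behavior, determine the minimal extent the walk must span in each direction so as to realize the correct shift, to terminate at $t$, and to cover $P$; (4) write the total length as a formula in the turning points and $r$, and observe that only $O(1)$ choices of the qualitative ``shape'' of the walk need be considered (distinguished by travel direction and by whether the walk overshoots $t$), each optimized in linear time; (5) take the minimum over these $O(1)$ cases, returning $\infty$ if none is feasible. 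The main obstacle I anticipate is step (3)–(4): correctly bookkeeping the interaction between the mandatory visits to $P$, the endpoint constraint $w_{|W|}=t$, and the required net rotation, so that the length formula is exactly right and no shorter walk is overlooked. In particular one must be careful that a walk achieving the desired rotation by wrapping all the way around the cycle can be cheaper or more expensive than a walk that oscillates back and forth, and that covering $P$ may force extra backtracking whose cost must be charged correctly; verifying that the closed-form length accounts for every such case is where the real work lies.
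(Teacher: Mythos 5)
Your overall strategy---decompose the walk into clockwise/counterclockwise moves, use cancellation of immediate reversals, cover $P$ by excursions, and characterize the coloring change as a rotation---is the same skeleton as the paper's proof. However, your central invariant is stated incorrectly, and the error is not cosmetic. You claim that the non-moving tokens are shifted by the net signed displacement ``taken modulo $n$.'' In fact the final coloring depends on the net signed displacement as an \emph{integer}, not on its residue mod $n$: a walk that goes once around the cycle has net displacement $n \equiv 0 \pmod n$, yet it cyclically shifts all $n-1$ non-moving tokens by one position among themselves, which is very far from the identity. More generally, a monotone walk of $d < n$ steps shifts only the tokens on the swept arc (a partial rotation), while each additional full lap rotates the $n-1$ non-moving tokens by one more step. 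Consequently your feasibility test (``$f'$ is a cyclic rotation of $f$ modulo the moving-token slot'') and your length formula parametrized by $r \bmod n$ would accept infeasible instances and miscount lengths whenever the optimal walk wraps around the cycle. The paper splits into two cases precisely because of this: when the essential middle segment is a simple $s$--$t$ path, only the swept arc is permuted and the rest of the work is covering $P$ by two out-and-back excursions (optimizing over the split of $P$ between them); when the middle segment covers all of $V(C)$, the answer is determined by the minimum number of extra full laps needed, i.e., the minimum left cyclic shift of a string of length $n-1$.

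This second case also exposes a gap in your linear-time claim. The number of candidate lap counts is not $O(1)$ --- the minimum length of a swapping sequence can be $\Theta(n^3)$, so up to $\Theta(n^2)$ laps may be needed --- and naively testing each candidate shift costs $O(n)$, giving $O(n^2)$ overall. The paper obtains linear time by reducing ``minimum number of cyclic shifts taking $S_g$ to $S_{f'}$'' to finding the first occurrence of $S_{f'}$ as a substring of $S_g \cdot S_g$, solved with the Knuth--Morris--Pratt algorithm. Your step (2), ``compute in linear time the set of admissible net shifts $r$,'' asserts exactly the thing that needs an argument; without some such string-matching idea (or an equivalent) the lemma's linear-time bound, on which the $O(n^3)$ bound of the main theorem depends, is not established.
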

\begin{proof}
Let $\langle C, f, f', s, t, P \rangle$ be an instance of \substs{}, where $C$ is a cycle of $n$ vertices.
We assume that $f(s) = f'(t)$ since otherwise it is a trivial no-instance.
We arbitrarily fix a cyclic orientation on $C$ and call it the clockwise direction
(and the other one the counterclockwise direction).

Observe that if the moving token goes in one direction on the cycle, 
then the other tokens passed are shifted to the other direction (see \cref{fig:cycle_eg}).
Observe also that if the moving token goes one step in one direction
and goes back in the other direction immediately, then these moves cancel out and the coloring stays the same.
Thus, if $P = \emptyset$, then an optimal solution never goes back and forth.
Based on these observations, Yamanaka et al.~\cite{YamanakaDHKNOSS19} presented
a polynomial-time algorithm for \sts{} on cycles.
We also use these facts, but since $P \ne \emptyset$ in general, we need some new ideas.

Let $W = \langle u_{1}, \dots, u_{p} \, (= v_{1}), \dots, v_{q} \, (= w_{1}), \dots, w_{r}\rangle$
be a walk corresponding to a desired swapping sequence of the minimum length, where
\begin{itemize}
  \item $v_{1}$ is the last vertex in $W$ such that $v_{1} = s$ and 
  the coloring after executing the swapping sequence up to $v_{1}$ is $f$, and
  \item $v_{q}$ is the first vertex in $W$ such that $v_{q} = t$ and 
  the coloring after executing the swapping sequence up to $v_{q}$ is $f'$.
\end{itemize}
We show that there is a direction $\leftarrow$, which is clockwise or counterclockwise,
such that the following properties hold:
\begin{itemize}
  \item the moves along $\langle u_{1}, \dots, u_{p} \rangle$
  first go in the direction $\leftarrow$ some number of steps and 
  then go back in the opposite direction $\rightarrow$ the same number of steps;
  \item the moves along $\langle v_{1}, \dots, v_{q} \rangle$ go in the direction $\rightarrow$ only;
  \item the moves along $\langle w_{1}, \dots, w_{r} \rangle$ 
  first go in the direction $\rightarrow$ some number of steps and 
  then go back in the direction $\leftarrow$ the same number of steps.
\end{itemize}

To show the property of $\langle v_{1}, \dots, v_{q} \rangle$,
assume that $q \ge 2$ and that $v_{2}$ is the clockwise neighbor of $v_{1}$.
If $v_{i} = s$ for some $i \ne 1$, then $V(C) = \{v_{1}, \dots, v_{q}\}$ holds
as the coloring after executing the swapping sequence up to $v_{i}$ is not $f$.
Similarly, if $v_{i} = t$ for some $i \ne q$, then $V(C) = \{v_{1}, \dots, v_{q}\}$ holds
as the coloring after executing the swapping sequence up to $v_{i}$ is not $f'$.
Otherwise, $\{v_{1}, \dots, v_{q}\}$ is the set of consecutive vertices on $C$
from $s$ to $t$ in the clockwise direction.
In all cases, if there is a counterclockwise move,
then the first such move $v_{j} \rightarrow v_{j+1}$ can be removed with the previous one $v_{j-1} \rightarrow v_{j}$
without changing the set of visited vertices.
Since $W$ is of the minimum length, we can conclude that there is no such move.
Now the properties of $\langle u_{1}, \dots, u_{p} \rangle$ and $\langle w_{1}, \dots, w_{r} \rangle$
follows easily as they are necessary only for visiting more vertices (in $P$).

We compute the minimum length for each of the cases 
$V(C) \ne \{v_{1}, \dots, v_{q}\}$ and $V(C) = \{v_{1}, \dots, v_{q}\}$.

\paragraph{The case of $V(C) \ne \{v_{1}, \dots, v_{q}\}$.}
In this case, $\langle v_{1}, \dots, v_{q} \rangle$ is a simple path from $s$ to $t$.
We assume that this is a clockwise path. (The other case is symmetric.)
Since only $\langle v_{1}, \dots, v_{q} \rangle$ changes the coloring and the other parts of $W$ cancel out,
we first check that we get $f'$ by applying $\langle v_{1}, \dots, v_{q} \rangle$ to $f$
and then compute the other parts that visit $P \setminus \{v_{1}, \dots, v_{q}\}$.
Let $\langle x_{1}, \dots, x_{k} \rangle$ be the sequence of the vertices of $P \setminus \{v_{1}, \dots, v_{q}\}$
ordered in the counterclockwise order from $s$ to $t$.
Observe that if $p \ge 2$, then the first part $\langle u_{1}, \dots, u_{p}\rangle$ of $W$
starts at $s$ in the counterclockwise direction, visits some vertices $x_{1},\dots,x_{k'}$,
and comes back to $s$. Thus, its length is the twice of the distance from $s$ to $x_{k'}$ in the counterclockwise direction.
Similarly, if $r \ge 2$, then the last part $\langle w_{1}, \dots, w_{r}\rangle$ of $W$
starts at $t$ in the clockwise direction, visits the remaining vertices $x_{k'+1}, \dots, x_{k}$,
and comes back to $t$. Its length is the twice of the distance from $t$ to $x_{k'+1}$ in the clockwise direction.
The index $k' \in \{0, \dots, k\}$ that minimizes the sum $p+r$ can be found in linear time
by precomputing the counterclockwise distances from $s$ to $x_{1},\dots,x_{k}$
and the clockwise distances from $t$ to $x_{1},\dots,x_{k}$ in linear time.

\paragraph{The case of $V(C) = \{v_{1}, \dots, v_{q}\}$.}
In this case, $W = \langle v_{1}, \dots, v_{q} \rangle$ as there is no other vertex to visit.
Now it is easy to compute the minimum length in polynomial time:
guess the direction of the walk;
go in the guessed direction $n-1$ steps from $s$;
and then further proceed in the same direction until we get the desired coloring.
Since the minimum length is $O(n^{3})$ (if not $\infty$) in general~\cite{YamanakaDHKNOSS19}, 
this algorithm runs in polynomial time.

To do it in linear time, we reduce the problem to a substring matching problem
that can be solved in linear time by the KMP algorithm~\cite{KnuthMP77}.

Assume that $W$ goes in the clockwise direction. (The other case is symmetric.)
Let $g$ be the coloring obtained from $f$ by executing the swapping sequence along $W$
up to the first point where all vertices are visited and the moving token is placed at $t$.
The remaining of the walk we are looking for repeats the (clockwise) walk from $t$ to $t$ some number of times.
Observe that if we repeat it $i$ times, then the coloring we get is the one obtained from $g$
by shifting the non-moving tokens $i$ steps in the counterclockwise direction~(see \cref{fig:cycle_eg}).
Thus it suffices to compute the minimum number of shifts to obtain $f'$.

Let $t_{\mathrm{next}}$ and $t_{\mathrm{prev}}$ be the clockwise and counterclockwise neighbors of $t$, respectively.
Let $S_{g} = \langle c_{1}, \dots, c_{n-1} \rangle$ be the sequence of the colors under $g$ of vertices
from $t_{\mathrm{next}}$ to $t_{\mathrm{prev}}$ in the clockwise ordering.
Similarly, let $S_{f'}$ be the same sequence but under $f'$.
Observe that if $S_{f'}$ can be obtained from $S_{g}$ by $i$ cyclic shifts 
(in the counterclockwise direction, or to the left in this context),
then $S_{f'} = \langle c_{i+1}, \dots, c_{n-1}, c_{1}, \dots, c_{i} \rangle$ holds.
The minimum $i$ satisfying this can be found by finding the first index such that $S_{f'}$ starts in
$S_{g} \cdot S_{g} =  \langle c_{1}, \dots, c_{n-1}, c_{1}, \dots, c_{n-1} \rangle$ as a substring,
which can be done in linear time~\cite{KnuthMP77}.
\end{proof}

\subsection{\substs{} on complete graphs}
Let $I= \langle K, f, f', s, t, P \rangle$ be an instance of \substs, where $K = (V, E)$ is a complete graph.
As before, we assume that $f(s) = f'(t)$.
Furthermore, we assume that $s$ has a unique color under $f$ (and so does $t$ under $f'$).
We set the unique color to $0$. That is, we assume that $f(s) = f'(t) = 0$,
$f(v) \ne 0$ if $v \ne s$, and $f'(v) \ne 0$ if $v \ne t$.
Observe that this does not change the instance since the moving token anyway moves from $s$ to $t$.

Let $R = \{v \in V \mid f(v) \ne f'(v)\} \cup \{s, t\} \cup P$.
We define a directed multigraph $D = (V_D, E_D)$, possibly with self-loops and parallel edges, as 
$V_{D} = \{f(v) \mid v \in R\}$ and $E_{D} = \{(f(v), f'(v)) \mid v \in R\}$.
This graph $D$ is almost the same as the conflict graph defined in \cite{YamanakaDHKNOSS19}.
The difference here is the self-loops corresponding to the vertices in $P \setminus \{v \in V \mid f(v) \ne f'(v)\}$
(and $s$ when $s = t$).
Thus the assumption that $f$ and $f'$ use the same number of vertices for each color 
implies that the indegree and the outdegree are the same for each vertex (or, color) in $D$.
This implies that each connected component of $D$ is strongly connected and has an Eulerian circuit.
Let $\mathrm{cc}(D)$ denote the number of (strongly) connected components of $D$.

The rest of this subsection is devoted to a proof of the following equation:
\begin{equation}
  \lensubsts(I) = |R| + \mathrm{cc}(D) - 2.
  \label{eq:subpd-complete-graph}
\end{equation}

\begin{lemma}
\label{lem:subpd-complete-upper-bound}
$\lensubsts(I) \le |R| + \mathrm{cc}(D) - 2$.
\end{lemma}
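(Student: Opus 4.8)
The plan is to prove the inequality constructively, by exhibiting a single walk from $s$ to $t$ whose swapping sequence turns $f$ into $f'$, visits every vertex of $P$, and has length exactly $|R| + \mathrm{cc}(D) - 2$. The basic tool is the following \emph{shift observation}: if the moving token (color $0$) walks along \emph{distinct} vertices $\langle x_0, x_1, \dots, x_m\rangle$, then after the corresponding swaps each $x_j$ with $j<m$ holds the color that $x_{j+1}$ had initially, while $x_m$ holds $0$. I will also use that the map $v \mapsto (f(v),f'(v))$ is a bijection from $R$ onto $E_D$; since each edge of $D$ lies in a unique component, this partitions $R$ into sets $R_C$, one for each connected component $C$ of $D$, with $|R_C|$ equal to the number of edges of $C$. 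Because every $v \in R$ appears in the walk segment built for its component, the assumption $P \subseteq R$ guarantees that $P$ is visited.

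First I would treat the component $D_0$ containing the color $0$. As $0$ occurs only at $s$ under $f$ and only at $t$ under $f'$, the color $0$ has in-degree and out-degree exactly $1$ in $D$, its sole out-edge being that of $s$ and its sole in-edge that of $t$. Hence every Eulerian circuit of $D_0$ (one exists, each component being balanced, connected, and thus Eulerian) must begin with $s$'s edge and end with $t$'s edge. Writing the circuit on colors as $0 = b_0 \to b_1 \to \cdots \to b_{\ell_0} = 0$ and replacing each edge by its vertex gives distinct vertices $\langle v_1 = s, v_2, \dots, v_{\ell_0} = t\rangle$ with $f(v_j)=b_{j-1}$ and $f'(v_j)=b_j$. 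By the shift observation, walking along this sequence leaves each $v_j$ with color $b_j = f'(v_j)$; this is a walk from $s$ to $t$ of length $\ell_0 - 1$ that correctly recolors $R_{D_0}$.

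Next I would repair every other component $C$ by a closed detour based at $s$. Translating an Eulerian circuit of $C$ into vertices $\langle v_1, \dots, v_{\ell_C}\rangle$ as above, I prepend the detour $\langle s, v_1, \dots, v_{\ell_C}, s\rangle$ to the main walk. Applying the shift observation to the distinct prefix $\langle s, v_1, \dots, v_{\ell_C}\rangle$ and then accounting for the closing step $v_{\ell_C} \to s$ shows that $0$ returns to $s$ while each $v_j$ ends with its target color $f'(v_j)$ (the last vertex receiving the start color of the circuit, which equals $f'(v_{\ell_C})$ since the circuit is closed). Each such detour costs $|R_C| + 1$ swaps, permanently changes only the colors of $R_C$, and restores $0$ at $s$; as the sets $R_C$ are pairwise disjoint and disjoint from $R_{D_0}$, the detours and the main walk do not interfere. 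Concatenating the $\mathrm{cc}(D) - 1$ detours and then the main walk yields a valid solution of length $\sum_{C \ne D_0}(|R_C| + 1) + (|R_{D_0}| - 1) = |R| + \mathrm{cc}(D) - 2$.

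I expect the genuine work to lie not in any hard combinatorial step but in the bookkeeping: checking that the edge-to-vertex translation always yields distinct vertices so that the shift observation applies, that the Eulerian circuit of $D_0$ is indeed forced to run from $s$ to $t$, and that the composed shifts land every token exactly on its target color. I would also verify the degenerate case $s = t$, where $s$ contributes the self-loop $(0,0)$, the color $0$ forms the singleton component $D_0$ with $|R_{D_0}| = 1$, the main walk is trivial, and the formula still evaluates correctly.
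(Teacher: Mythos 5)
Your proof is correct and follows essentially the same route as the paper: both translate an Eulerian circuit of each component of the conflict multigraph $D$ into a walk on the corresponding vertices of $R$, single out the component of color $0$ (whose circuit is forced to run from $s$'s edge to $t$'s edge), and charge $|R_C|+1$ steps to each remaining component, for a total of $|R|+\mathrm{cc}(D)-2$. The only difference is cosmetic: you attach the closed detours at $s$ before the main walk, whereas the paper attaches them at $t$ after it.
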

\begin{proof}
We prove this by constructing a swapping sequence following the ideas in~\cite{YamanakaDHKNOSS19}.
Let $C_{1}, \dots, C_{\mathrm{cc}(D)}$ be the connected components of $D$.
We assume without loss of generality that $0 \in V(C_{1})$.

For $C_{1}$, let $\langle e_{1}(1), \dots, e_{1}(t_{1}) \rangle$ be an Eulerian circuit of $C_{i}$
such that $e_{1}(1) = (f(s), f'(s))$ and $e_{1}(t_{1}) = (f(t), f'(t))$.
Such an Eulerian circuit exists since $0$ has the unique outneighbor $f'(s)$ and the unique inneighbor $f(t)$.
For each $C_{i}$ with $i \ge 2$,
let $\langle e_{i}(1), \dots, e_{i}(t_{i}) \rangle$ be an arbitrary Eulerian circuit of $C_{i}$.

Let us fix a bijective correspondence between $R$ and $E_{D}$
such that if $v \in R$ corresponds to $e \in E_{D}$, then $e = (f(v), f'(v))$ holds.
Then, for each $i$ and $j$, let $v_{i}(j) \in R$ be the vertex corresponding to $e_{i}(j)$.
Now we define a walk $W$ from the Eulerian circuits above as follows:
\begin{align*}
W = \langle \, v_{1}(1), \dots, v_{1}(t_{1}), \quad 
&v_{2}(1), \dots, v_{2}(t_{2}), \quad v_{1}(t_{1}), \\
&v_{3}(1), \dots, v_{3}(t_{3}), \quad v_{1}(t_{1}), \\
&\ldots \\
&v_{\mathrm{cc}(D)}(1), \dots, v_{\mathrm{cc}(D)}(t_{\mathrm{cc}(D)}), \quad v_{1}(t_{1}) \, \rangle.
\end{align*}

Clearly, $W$ is a walk from $s$ ($=v_{1}(1)$) to $t$ ($=v_{1}(t_{1})$).
It is easy to see that each vertex in $P$ appears in $W$.
Observe that $|W|-1 = |R| + \mathrm{cc}(D) - 2$
since $v_{1}(t_1)$ appears $\mathrm{cc}(D)$ times in $W$
each vertex in $R \setminus \{v_{1}(t_1)\}$ appears once,
and no other vertex appears in $W$.

Let $f_{W}$ be the coloring obtained by applying the swapping sequence along $W$ to $f$.
It suffices to show that $f_{W}(v) = f'(v)$ for each $v \in R$ 
since vertices in $V \setminus R$ agree in $f$ and $f'$ and do not appear in $W$.

First assume that $v = v_{1}(t_1)$.
Since the last vertex of $W$ is $v_{1}(t_1)$, it holds that $f_{W}(v_{1}(t_1)) = f(v_{1}(1)) = f(s) = 0 = f'(t)$.

Next assume that $v = v_{i}(j)$ with $j < t_{i}$.
In this case, $f_{W}(v_{i}(j)) = f(v_{i}(j + 1))$ holds
since the moving token visits $v_{i}(j)$ only once
and it visits $v_{i}(j+1)$ for the first time right after visiting $v_{i}(j)$.
Since the edges $e_{i}(j) = (f(v_{i}(j))$, $f'(v_{i}(j)))$ and $e_{i}(j+1) = (f(v_{i}(j+1))$, $f'(v_{i}(j+1)))$ 
consecutively appear in the Eulerian circuit $C_{i}$,
it holds that $f'(v_{i}(j)) = f(v_{i}(j + 1))$,
and thus $f_{W}(v_{i}(j)) = f'(v_{i}(j))$.

Finally assume that $v = v_{i}(t_{i})$ for some $i \ne 1$.
The color $f_{W}(v_{i}(t_i))$ is the color of $v_{1}(t_1)$ when the moving token visits $v_{i}(t_i)$.
The vertex $v_{1}(t_1)$ got this color $f_{W}(v_{i}(t_i))$ when the moving token visits $v_{i}(1)$.
Since $v_{i}(1)$ is visited only once, $f_{W}(v_{i}(t_i)) = f(v_{i}(1))$ holds.
Since the edges $e_{i}(t_{i}) = (f(v_{i}(t_{i}))$, $f'(v_{i}(t_{i})))$ and $e_{i}(1) = (f(v_{i}(1))$, $f'(v_{i}(1)))$ 
consecutively appear in $C_{i}$, it holds that $f'(v_{i}(t_{i})) = f(v_{i}(1))$,
and thus $f_{W}(v_{i}(t_{i})) = f'(v_{i}(t_{i}))$.
\end{proof}

\begin{lemma}
\label{lem:subpd-complete-lower-bound}
$\lensubsts(I) \ge |R| + \mathrm{cc}(D) - 2$.
\end{lemma}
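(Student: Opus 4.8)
The plan is to bound the length of an arbitrary valid walk from below by separately accounting for the steps needed to reach the relevant vertices and the steps needed to pass between the strongly connected components of $D$. Fix any walk $W=\langle w_1,\dots,w_m\rangle$ realizing the swapping sequence from $f$ to $f'$ with $w_1=s$, $w_m=t$, and $P\subseteq\{w_1,\dots,w_m\}$, and write $\ell=m-1$ for its length. I will show $\ell\ge|R|+\mathrm{cc}(D)-2$; since $W$ is arbitrary and $\lensubsts(I)$ is the minimum such $\ell$, this gives the lemma.

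First I would record the easy half: every vertex of $R$ appears in $W$. The endpoints $s,t$ and all of $P$ appear by the definition of \substs, and any vertex $v$ with $f(v)\ne f'(v)$ must be visited, since the color of $v$ changes only when the moving token occupies it. Call a step $w_i\to w_{i+1}$ an \emph{arrival} at $w_{i+1}$ when $w_{i+1}$ occurs for the first time in $W$. The $|R|-1$ vertices of $R\setminus\{s\}$ then each possess a distinct arrival step (their endpoints are pairwise distinct), accounting for $|R|-1$ distinct steps of $W$. It therefore suffices to exhibit $\mathrm{cc}(D)-1$ further, pairwise distinct steps, each distinct from these arrivals; I call such steps \emph{bridges}. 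Note that a bridge is either a step revisiting an already-seen vertex or an arrival at a vertex outside $R$; in both cases it is automatically disjoint from the $|R|-1$ arrivals at vertices of $R$. Hence $\ell\ge(|R|-1)+(\mathrm{cc}(D)-1)=|R|+\mathrm{cc}(D)-2$, as required.

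The core of the argument is producing these $\mathrm{cc}(D)-1$ bridges, mirroring the upper bound, where the moving token traverses an Eulerian circuit inside each component and returns to the hub $t$ exactly $\mathrm{cc}(D)-1$ times. For the lower bound I would use the invariant that the unique token of color $0$ (the moving token) must end at $t\in C_1$ and is never deposited on a vertex whose target color lies in a component $C_j$ with $j\ge2$, since $0\notin V(C_j)$. From this I would argue that for every component other than $C_1$ the walk must both enter the set of vertices realizing that component's color exchanges and later leave it, and that these crossings can be charged to distinct steps that are not arrivals in $R$. Concretely, I would build an auxiliary graph on $\{C_1,\dots,C_{\mathrm{cc}(D)}\}$ whose edges are supplied by such crossing steps and show it is connected, because $W$ is a single walk correctly realizing the exchanges of every component; a connected graph on $\mathrm{cc}(D)$ vertices has at least $\mathrm{cc}(D)-1$ edges, yielding the required bridges.

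The main obstacle is making this charging injective: one must pin down, for each non-initial component, a specific crossing step and prove these steps are pairwise distinct and never coincide with an arrival at a vertex of $R$. The difficulty is that several vertices may share a color and that the moving token may interleave its visits to distinct components arbitrarily, so a purely local enter/leave count risks double-charging a step. I expect to resolve this either by induction on $\mathrm{cc}(D)$, peeling off a component not containing color $0$ (paralleling the decomposition behind \cref{lem:sub-pd-sum}) to obtain an instance with one fewer component and a walk shortened by at least one bridge, or by ordering the components by the first time $W$ realizes one of their edges and charging each newly entered component to the step on which the moving token first carries color $0$ out of it. As a consistency check, the walk built in \cref{lem:subpd-complete-upper-bound} attains equality—visiting exactly $R$ and revisiting the hub $\mathrm{cc}(D)-1$ times—so $\mathrm{cc}(D)-1$ bridges is the tight target and no slack is lost.
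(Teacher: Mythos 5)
Your outline identifies the right quantity to charge for ($|R|-1$ arrival steps plus $\mathrm{cc}(D)-1$ extra ``bridge'' steps), and the first half --- every vertex of $R$ is visited, and distinct vertices of $R\setminus\{s\}$ have distinct arrival steps --- is fine. But the core of the lower bound is exactly the part you leave as a sketch: you never actually construct the auxiliary graph on $\{C_1,\dots,C_{\mathrm{cc}(D)}\}$, never specify which steps of $W$ supply its edges, and never prove it is connected or that the chosen steps are pairwise distinct and disjoint from the arrivals. You acknowledge the double-charging risk yourself and offer two candidate fixes without carrying either out, so as written this is a plan rather than a proof. The difficulty is real: components of $D$ are sets of \emph{colors}, several vertices of $R$ may share a color, the graph $K$ is complete so the walk can interleave visits to vertices of different components arbitrarily, and the same step can simultaneously be the ``exit'' from one component's vertex set and the ``entry'' into another's; a local enter/leave count does not obviously yield $\mathrm{cc}(D)-1$ injectively charged steps. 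Also, your stated invariant that the color-$0$ token ``is never deposited on a vertex whose target color lies in $C_j$'' is imprecise --- the moving token is simply carried at the head of the walk until the very end --- so it cannot by itself drive the crossing argument.

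The paper avoids this charging problem entirely by inducting on the length of the walk rather than on $\mathrm{cc}(D)$: it peels off the first swap $w_1\to w_2$, forms the residual instance $I_2$ with coloring $f_2$ and start vertex $w_2$, and shows by a case analysis on whether $w_2\in R$ (and on whether $f(w_1)$ and $f(w_2)$ lie in the same component of $D$) that a single swap changes the edge multiset of $D$ in a way that decreases the potential $|R|+\mathrm{cc}(D)$ by at most one. Each step is then analyzed locally as an edge replacement in the multigraph $D$, which is much easier to control than a global connectivity argument over the whole walk. If you want to salvage your direct approach, the missing lemma you must prove is precisely that the steps of $W$ that are not first-arrivals at vertices of $R$ induce a connected relation on the components of $D$; until that is established, the bound $\ell\ge|R|+\mathrm{cc}(D)-2$ does not follow.
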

\begin{proof}
We assume without loss of generality $s,t \notin P$.
We use induction on $\lensubsts(I)$.
If $\lensubsts(I) = 0$, then the instance $I= \langle K$, $f$, $f'$, $s$, $t$, $P \rangle$ satisfies that 
$f = f'$, $s = t$, $P = \emptyset$, and thus $R = \{s,t\} = \{s\}$.
As $f(s) = 0$ ($= f'(s)$), we have $V_{D} = \{0\}$ and $E_{D} = \{(0, 0)\}$.
Hence, $|R| + \mathrm{cc}(D) - 2 = 1 + 1 - 2 = 0 = \lensubsts(I)$.
In the following, we assume that $\lensubsts(I) = k$ for some $k \ge 1$
and that the lemma holds for every instance $I'$ with $\lensubsts(I) < k$.

Let $W = \langle w_{1}, w_{2}, \dots, w_{k+1} \rangle$ be the walk corresponding to a desired swapping sequence of length $k$.
Let $f_{2}$ be the second coloring in the swapping sequence.
That is, 
$f_{2}(w_{1}) = f(w_{2})$,
$f_{2}(w_{2}) = f(w_{1})$,
and $f_{2}(v) = f(v)$ for all $v \notin \{w_{1}, w_{2}\}$.
Let $W_{2} = \langle w_{2}, w_{3}, \dots, w_{k+1}\rangle$.
Then, $W_{2}$ is the walk corresponding to a swapping sequence for
$I_{2} = \langle K, f_{2}, f', w_{2}, t, P \rangle$.
In the same way as $R$ and $D$ for $I$, we define $R_{2}$ and $D_{2} = (V_{D_{2}}, E_{D_{2}})$ for $I_{2}$.
Observe that $\lensubsts(I_2) \le |W_{2}| -1 = k-1$.
By the induction hypothesis, $\lensubsts(I_{2}) \ge |R_{2}| + \mathrm{cc}(D_{2}) - 2$ holds.
Thus we have $k \ge |R_{2}| + \mathrm{cc}(D_{2}) - 1$.
Now, it suffices to show that 
\[
  |R_{2}| + \mathrm{cc}(D_{2}) \geq |R| + \mathrm{cc}(D) - 1.
\]
In the following, we consider the cases of $w_{2} \in R$ and $w_{2} \notin R$ separately.

\paragraph{The case of $w_{2} \in R$.}
Recall that
\begin{align*}
  R &= \{v \in V \mid f(v) \ne f'(v)\} \cup \{w_{1}, t\} \cup P, \\
  R_{2} &= \{v \in V \mid f_{2}(v) \ne f'(v)\} \cup \{w_{2}, t\} \cup P.
\end{align*}
By the assumption $w_{2} \in R$, it holds that $w_{2} \in R \cap R_{2}$.
For $v \notin \{w_{1}, w_{2}\}$, we have
$v \in R$ if and only if $v \in R_{2}$ since $f_{2}(v) = f(v)$.
Hence, $R_{2} \subseteq R$ and $R \setminus R_{2} \subseteq \{w_{1}\}$.
That is, $R = R_{2}$ or $R = R_{2} \mathbin{\dot{\cup}} \{w_{1}\}$, 
where $\dot{\cup}$ denotes the disjoint union.

\textit{$\bullet$ Subcase} $R = R_{2}$.
In this case, both $w_{1}$ and $w_{2}$ belong to $R = R_{2}$, and thus $V_{D_{2}} = V_{D}$.
Observe that $w_{1} \in R_{2}$ implies that $f_{2}(w_{1}) \ne f'(w_{1})$ or $w_{1} = t$.
The latter case also implies that $f_{2}(w_{1}) \ne f'(w_{1})$ as $f'(w_{1}) = f'(t) = 0 \ne f(w_{2}) = f_{2}(w_{1})$.
This implies that the set $E_{D_{2}}$ of edges is obtained from $E_{D}$ by removing 
$(f(w_{1}), f'(w_{1}))$ and $(f(w_{2}), f'(w_{2}))$
and adding $(f_{2}(w_{2}), f'(w_{2})) = (f(w_{1}), f'(w_{2}))$ 
and $(f_{2}(w_{1}), f'(w_{1})) = (f(w_{2}), f'(w_{1}))$.
If the colors $f(w_{1})$ and $f(w_{2})$ belong to different connected components $C_{a}$ and $C_{b}$ in $D$,
then $C_{a}$ and $C_{b}$ are merged into a single connected component of $D_{2}$ by the replacement of the edges,
while the other connected components are unaffected.
Thus, $\mathrm{cc}(D_{2}) = \mathrm{cc}(D) - 1$ holds.
On the other hand,
If the colors $f(w_{1})$ and $f(w_{2})$ belong to the same connected component $C_{a}$ in $D$,
then all colors involved in the replacement belong to $C_{a}$.
Since other connected components are unaffected,
$\mathrm{cc}(D_{2}) \ge \mathrm{cc}(D)$ holds.
Now, we can conclude that
\[
 |R_{2}| + \mathrm{cc}(D_{2}) = |R| + \mathrm{cc}(D_{2}) \ge |R| + \mathrm{cc}(D) - 1.
\]

\textit{$\bullet$ Subcase} $R= R_{2} \mathbin{\dot{\cup}} \{w_{1}\}$.
The set of edges $E_{D_2}$ is obtained from $E_D$ by removing $(f(w_{1}), f'(w_{1})))$ and $(f(w_{2}), f'(w_{2}))$ 
and adding $(f_{2}(w_{2}), f'(w_{2})) = (f(w_{1}), f'(w_{2}))$.
Since $w_{1} \notin R_{2}$, it holds that $ f'(w_{1}) = f_{2}(w_{1}) = f(w_{2})$.
Thus, $(f(w_{1}), f(w_{2})) = (f(w_{1}), f'(w_{1})) \in E_{D}$.
Since $(f(w_{1}), f'(w_{1})), (f(w_{2}), f'(w_{2})) \in E_{D}$,
the colors $f(w_{1})$, $f'(w_{1})$, $f(w_{2})$ and $f'(w_{2})$ belong to the same connected component of $D$.
This implies that $\mathrm{cc}(D_{2}) \ge \mathrm{cc}(D)$ as before, and thus
\[
  |R_{2}| + \mathrm{cc}(D_{2}) 
  \ge (|R|-1) + \mathrm{cc}(D)
  = |R| + \mathrm{cc}(D) - 1.
\]

\paragraph{The case of $w_{2} \notin R$.}
We first show that $f'(w_{1}) \ne f(w_{2})$.
Suppose to the contrary that $f(w_{2}) = f'(w_{1})$.
The assumption $w_{2} \notin R$
implies that $f'(w_{2}) = f(w_{2})$, and thus $f'(w_{2}) = f'(w_{1})$.
Since the color $f(w_{1}) = 0$ is unique in $f$,
we have $f(w_{2}) \ne 0$. Hence we have
\[
  f(w_{1}) \ne f'(w_{1}) = f'(w_{2}) = f(w_{2}).
\]
From the assumptions, $w_{1}, w_{2} \notin P$ holds.
We can see that $w_{k+1} \notin \{w_{1}, w_{2}\}$ as follows:
if $t = w_{1}$ ($= s$),
then $f(w_{1}) = f(s) = 0 = f'(t) = f'(w_{1})$ contradicting $f(w_{1}) \ne f'(w_{1})$;
if $t = w_{2}$,
then $f(w_{2}) = f'(w_{1}) \ne f(w_{1}) = 0 = f'(t) = f'(w_{2})$ contradicting $w_{2} \notin R$.
Since $K$ is a complete graph, $w_{1}$ and $w_{2}$ have the same neighborhood.
Therefore, the instance $I_{2} = \langle K, f_{2}, f', w_{2}, t, P \rangle$
can be seen as the one obtained from $I = \langle K, f, f', w_{1}, t, P \rangle$
by renaming $w_{1}$ as $w_{2}$ and $w_{2}$ as $w_{1}$.
This implies that $\lensubsts(I) = \lensubsts(I_{2})$, a contradiction.
Thus, $f'(w_{1}) \ne f(w_{2})$ holds.

By the assumption, $w_{2} \notin R$.
Since $f_{2}(w_{1}) = f(w_{2})$ and $f'(w_{1}) \ne f(w_{2})$,
we have $f_{2}(w_{1}) \ne f'(w_{1})$, and thus $w_{1} \in R_{2}$.
For $v \notin \{w_{1}, w_{2}\}$, we have
$v \in R$ if and only if $v \in R_{2}$ since $f_{2}(v) = f(v)$.
Hence, it holds that $R_{2} = R \mathbin{\dot{\cup}} \{w_{2}\}$.
Now the set $E_{D_{2}}$ is obtained from $E_{D}$ 
by removing $(f(w_{1}), f'(w_{1}))$ and 
adding $(f_{2}(w_{2}), f'(w_{2})) = (f(w_{1}), f'(w_{2}))$ and $(f_{2}(w_{1}), f'(w_{1}))= (f(w_{2}), f'(w_{1}))$.
Assume first that $f(w_{2}) \in V_{D}$. The same discussion as for the case of $w_2 \in R$ and $R = R_2$
shows that  $\mathrm{cc}(D_{2}) \ge \mathrm{cc}(D) - 1$:
if $f(w_{1})$ and $f(w_{2})$ are in the different component, $\mathrm{cc}(D_{2}) = \mathrm{cc}(D) - 1$;
otherwise $\mathrm{cc}(D_2) \ge \mathrm{cc}(D)$.
Next assume that $f(w_{2}) \notin V_{D}$, and thus $V_{D_{2}} = V_{D} \mathbin{\dot\cup} \{f(w_{2})\}$.
Since $f(w_{2}) = f'(w_{2})$,
$(f(w_{1}), f'(w_{2})) = (f(w_{1}),  f(w_{2}))$ and
$(f(w_{2}), f'(w_{1})) = (f'(w_{2}), f'(w_{1}))$ hold.
Hence, $D_{2}$ is obtained from $D$ by adding the vertex $f(w_{2})$ 
and replacing the edge $(f(w_{1}), f'(w_{1}))$ with 
the path $(f(w_1), f(w_2), f'(w_1))$.
Thus, $\mathrm{cc}(D_2) = \mathrm{cc}(D)$ holds.
Therefore, it holds in both cases that
\[
|R_{2}| + \mathrm{cc}(D_{2}) 
\ge (|R| + 1) + \mathrm{cc}(D) - 1 
> |R| + \mathrm{cc}(D) - 1,
\]
as required.
\end{proof}

\subsection{The whole algorithm}

Let $\langle G=(V,E), f, f', k \rangle$ be an instance of \sts,
such that $G = (V,E)$ is a block-cactus graph with $|V| = n$ and $|E| = m$.
We first compute the set $\mathcal{B}_{G}$ of the biconnected components.
For each $H \in \mathcal{B}_{G}$, we mark all cut vertices, 
check whether $H$ contains a vertex $v$ with $f(v) \ne f(v')$,
and check whether $H$ is a cycle or a complete graph.
If $H$ is a complete graph, then we construct an implicit representation so that
we do not have to store the redundant information $E(H)$.
These preprocessing can be done in $O(m+n)$ time in total.
 
Let $s, t \in V$.
We compute the instance $I_{H} = \langle H, f_{H}, f'_{H}, s_{H}, t_{H}, P_{H} \rangle$ of \substs{} for all $H \in \mathcal{B}_{G}$.
We can do it in $O(m+n)$ time in a bottom-up manner over the tree structure of the biconnected components. 
Let $H \in \mathcal{B}_{G}$.
If $H$ is a cycle, then we compute $\lensubsts(I_{H})$ in $O(|V(H)|)$ time
using the algorithm in \cref{lem:subpd-cycle}.
If $H$ is a complete graph, then we compute $\lensubsts(I_{H})$ using \cref{eq:subpd-complete-graph},
which can be done in $O(|V(H)|)$ time from the implicit representation of $H$.
Thus, by \cref{lem:sub-pd-sum}, 
we can solve \ststs{$s$}{$t$} in $O(n)$ time, given that the aforementioned preprocessing is done.
Since we have $n^{2}$ candidates for the pair $s,t$, 
the total running time is $O(n^{3})$. This completes the proof of \cref{thm:block-cactus}.

Note that we only need $O(m+n)$ time to solve \ststs{$s$}{$t$}.
\begin{corollary}
\ststs{$s$}{$t$} on block-cactus graphs can be solved in linear time.
\end{corollary}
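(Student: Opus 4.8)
The plan is to observe that essentially all the work has already been done in the analysis of the algorithm for \cref{thm:block-cactus}; the only reason that algorithm runs in $O(n^{3})$ rather than in linear time is the outer loop that tries all $n^{2}$ candidate pairs $(s,t)$. Since an instance of \ststs{$s$}{$t$} fixes $s$ and $t$ as part of the input, I would simply skip that loop and run the remaining computation a single time, then compare the resulting minimum length with $k$.

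More concretely, I would first perform the preprocessing exactly as described: compute $\mathcal{B}_{G}$, mark the cut vertices, classify each $H \in \mathcal{B}_{G}$ as a cycle or a complete graph, and build an implicit representation for each complete component. This phase takes $O(m+n)$ time. Then, for the given pair $(s,t)$, I would construct the \substs{} instance $I_{H} = \langle H, f_{H}, f'_{H}, s_{H}, t_{H}, P_{H} \rangle$ for every $H \in \mathcal{B}_{G}$ in a single bottom-up traversal of the block tree, which again takes $O(m+n)$ time. Finally I would solve each $I_{H}$ --- using \cref{lem:subpd-cycle} when $H$ is a cycle and \cref{eq:subpd-complete-graph} when $H$ is a complete graph --- in $O(|V(H)|)$ time each, and sum the results via \cref{lem:sub-pd-sum}.

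The one accounting point to verify is that the solving phase is linear overall, i.e.\ that $\sum_{H \in \mathcal{B}_{G}} |V(H)| = O(n)$. This holds because the biconnected components of $G$ form a block tree: every non-cut vertex lies in a unique component, while the total number of incidences between cut vertices and the components containing them is bounded by the number of edges of the block tree, which is $O(n)$. Consequently the solving phase runs in $O(n)$ time, and combining the three phases gives a total running time of $O(m+n)$. The only mildly delicate part is ensuring that, for the fixed pair $(s,t)$, all the per-component data needed to build the instances $I_{H}$ (such as the separating cut vertices $c_{s}$ and $c_{t}$ and the membership conditions defining $P_{H}$) can be gathered within a single $O(m+n)$ pass rather than recomputed for each component; this is exactly what the bottom-up traversal over the block tree accomplishes, so no genuine obstacle arises.
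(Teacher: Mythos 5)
Your proposal is correct and follows essentially the same route as the paper: the corollary is obtained precisely by observing that the $O(n^{3})$ bound for \cref{thm:block-cactus} comes only from iterating over all $n^{2}$ pairs $(s,t)$, and that for a single fixed pair the preprocessing, the construction of the instances $I_{H}$, and the per-component solving via \cref{lem:subpd-cycle} and \cref{eq:subpd-complete-graph} together take $O(m+n)$ time. Your additional check that $\sum_{H \in \mathcal{B}_{G}} |V(H)| = O(n)$ is a reasonable piece of accounting that the paper leaves implicit.
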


\section{Hardness of the few-color and colorful cases}
\label{sec:hardness}

Since \sts{} clearly belongs to NP, in the following we only show the NP-hardness for each case.

\subsection{General tools for showing hardness}

The first tool uses the hardness of \textsc{Hamiltonian Path} 
to show the hardness of \sts{} with few colors.

A path (a cycle) in a graph is a \emph{Hamiltonian path} (a \emph{Hamiltonian cycle}, resp.)
if it visits every vertex in the graph exactly once.
The problems \textsc{Hamiltonian Path} and \textsc{Hamiltonian Cycle}
ask whether a given graph has a Hamiltonian path or a Hamiltonian cycle, respectively.
In the problem $(s,t)$-\textsc{Hamiltonian Path},
the first and last vertices $s$ and $t$ are fixed.

By \emph{attaching a cycle} of length $q$ at a vertex $v$,
we mean the operation of adding $q-1$ new vertices
and $q$ edges that form a cycle with $v$.

\begin{theorem}
\label{thm:few-color-gen}
Let $\mathcal{C}$ be a graph class.
For every fixed $c \ge 2$, \sts{} with $c$ colors is NP-complete on $\mathcal{C}$
if the following conditions are satisfied:
\begin{enumerate}
  \item $(s,t)$-\textsc{Hamiltonian Path} is NP-complete on $\mathcal{C}$;
  \item there is an integer $q \ge 3$ such that $\mathcal{C}$ is closed under the operation 
  that attaches a cycle of length $q$ at a vertex.
\end{enumerate}
\end{theorem}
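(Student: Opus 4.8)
The plan is to establish membership in NP and then give a polynomial-time reduction from $(s,t)$-\textsc{Hamiltonian Path}, which is NP-complete on $\mathcal{C}$ by condition~1. Membership is immediate: \sts{} lies in NP (as noted above), and restricting to $c$ colors only shrinks the instance space, so the $c$-color version is in NP as well.

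For the reduction I would start from an instance $\langle G, s, t \rangle$ of $(s,t)$-\textsc{Hamiltonian Path} with $n = |V(G)|$. Using condition~2 I attach a cycle of length $q$ at $s$ and another at $t$; call the result $G^{*}$, which still lies in $\mathcal{C}$. The purpose of these two attached cycles is to compensate for the fact that in \sts{} the first and last vertices of the walk are free, whereas $(s,t)$-\textsc{Hamiltonian Path} fixes the endpoints: I would color the attached cycles so that any cheap swapping sequence is forced to begin effectively at $s$ and end at $t$, thereby turning the free-endpoint problem back into an $(s,t)$-problem. Additional inert cycles, whose vertices carry the colors $3,\dots,c$ identically in $f$ and $f'$, bring the total palette up to exactly $c$ without affecting anything else, which is what lets the argument run for every fixed $c \ge 2$.

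The heart of the construction is the choice of $f, f'$ on $V(G)$ together with a tight length budget $k$. The key idea is to make $f'$ differ from $f$ at every vertex of $G$: then the walk underlying any valid swapping sequence must visit all $n$ vertices, so its length is at least $n-1$, with equality exactly when the walk is a simple path covering $V(G)$, i.e.\ a Hamiltonian path. Recall that sweeping the moving token along a simple path $\langle v_{1}, \dots, v_{n} \rangle$ leaves $f(v_{i+1})$ on $v_{i}$ and $f(v_{1})$ on $v_{n}$; so to make a Hamiltonian $s$--$t$ path actually produce the single fixed target $f'$, I want this induced color-shift to be \emph{independent} of which Hamiltonian path is taken. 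When $G$ is bipartite I would take $f$ to be a balanced proper $2$-coloring and $f'$ its pointwise complement: since every edge, hence every path, alternates colors, every Hamiltonian path induces the same shift (the complement) and thus realizes $f'$. This cleanly yields the forward direction (a Hamiltonian $s$--$t$ path gives a swapping sequence of length $n-1$) and, together with the endpoint gadgets and the budget $k$, the backward direction (a swapping sequence meeting the budget has a walk visiting every vertex once and running from $s$ to $t$).

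The main obstacle I anticipate is exactly this path-independence, which is what forces a fixed $f'$ to be reachable precisely by a Hamiltonian traversal. The proper-$2$-coloring argument is tailored to bipartite members of $\mathcal{C}$ (such as chordal bipartite graphs); for non-bipartite classes a Hamiltonian path need not alternate under any $2$-coloring, so I would have to place the colors more carefully — using the attached cycles to absorb the offending monochromatic edges and to re-establish the parity that the wrap-around condition $f(v_{n}) \ne f(v_{1})$ requires — so that some Hamiltonian $s$--$t$ path, and only such a walk within the budget, transforms $f$ into $f'$. Verifying both directions then reduces to an exchange argument showing that any walk of length at most $k$ on $G^{*}$ can neither skip nor revisit a vertex of $G$.
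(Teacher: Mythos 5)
There is a genuine gap. Your construction places the color differences on the vertices of $G$ itself (via a balanced proper $2$-coloring and its complement), which forces path-independence of the shift induced by a Hamiltonian traversal. As you note yourself, this only works when the members of $\mathcal{C}$ are (balanced) bipartite; but the theorem must hold for arbitrary $\mathcal{C}$ satisfying conditions 1 and 2, including non-bipartite classes such as chordal graphs (where $q=3$). Your proposed fix --- attaching cycles only at $s$ and $t$ and using them to ``absorb the offending monochromatic edges'' --- cannot work as stated: which edges of a Hamiltonian path are monochromatic depends on the particular (unknown) path, so no fixed placement of colors on $V(G)$, repaired only by gadgets at the two endpoints, can make every Hamiltonian $s$--$t$ path realize the same target $f'$. (Your idea is essentially the paper's separate, tailored reduction for grids and king's graphs, which reduces from \textsc{Hamiltonian Path} on \emph{balanced} grid graphs precisely because the bipartite complement trick is available there.)

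The missing idea in the general setting is to attach a cycle $C_v$ of length $q$ at \emph{every} vertex $v$ of $G$ and to put all color differences inside these private cycles, leaving $V(G)$ monochromatic under both $f$ and $f'$. Concretely: $C_s$ carries one black token under $f$ and none under $f'$ (forcing the walk to start at that black vertex), $C_t$ is the mirror image (forcing the walk to end in $C_t$), and for every other $v$ the black token in $C_v$ must shift by one position around $C_v$, which can only be achieved by traversing all of $C_v$ (otherwise the moves inside $C_v$ cancel out). This forces the walk to visit every $v$ without requiring the traversal of $G$ to change any color on $V(G)$, so no path-independence is needed; a budget of $k=(q+1)(|V|-2)+3$ then pins the projection of the walk onto $G$ down to a Hamiltonian $s$--$t$ path. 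Your NP-membership remark and your plan for padding the palette up to $c$ colors with inert attached cycles are fine.
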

\begin{proof}
We only prove the case of $c = 2$.
For a larger $c$, we can attach as many cycles as we need and use extra colors there.

Let $G = (V,E)$ be a graph in $\mathcal{C}$ and $s,t \in V$.
Let $H$ be the graph obtained from $G$ by attaching 
a cycle of length $q$ at each vertex.
Note that $H \in \mathcal{C}$ by the assumption.
We set $k = (q+1)(|V|-2) + 3$.

Now we define colorings $f$ and $f'$ (see \cref{fig:few-color-gen}).
We call the color 1 \emph{white} and the color 2 \emph{black}.
For each $v \in V$, let $C_{v}$ denote the cycle attached at $v$.
Let $f$ color one neighbor of $s$ in $C_{s}$ black,
and all other vertices in $C_{s}$ white.
We let $f'$ color all vertices in $C_{s}$ white.
Similarly, let $f$ color all vertices in $C_{t}$ white and 
let $f'$ color one neighbor of $t$ in $C_{t}$ black,
and all other vertices in $C_{t}$ white.
For $v \in V \setminus \{s,t\}$, let $f$ color one neighbor of $v$, say $w$, in $C_{v}$ black,
and all other vertices in $C_{v}$ white. Let $f'$ color the neighbor of $w$ in $V(C_{v}) \setminus \{v\}$ black,
and all other vertices in $C_{v}$ white.

For each $v \in V \setminus \{t\}$,
let $b_{v}$ denote the unique black vertex $u \in V(C_{v})$ under $f$.
Similarly, for each $v \in V \setminus \{s\}$,
let $b'_{v}$ denote the unique black vertex $u \in V(C_{v})$ under $f'$.
Note that for each $v \in V \setminus \{s,t\}$,
$b_{v}$ and $b'_{v}$ are adjacent.

\begin{figure}[tbh]
  \centering
  \includegraphics[scale=.9]{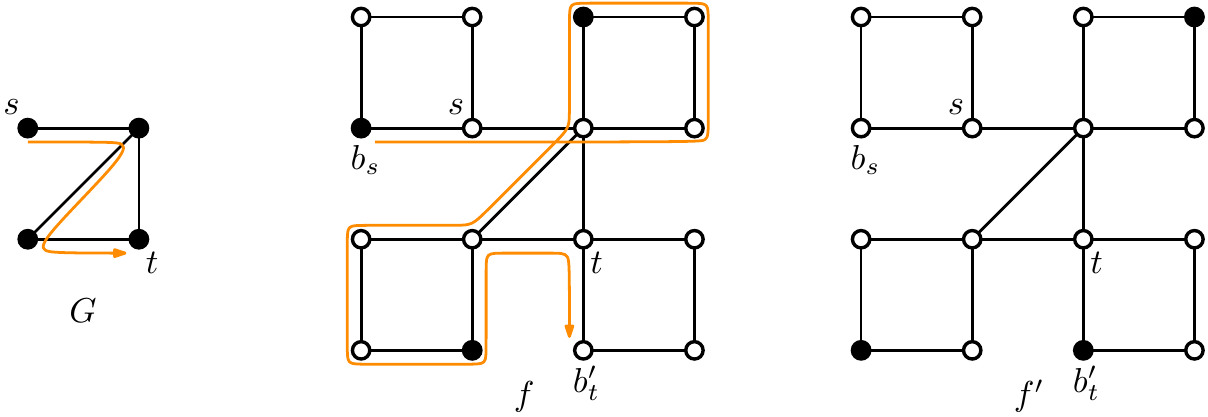}
  \caption{The reduction for \cref{thm:few-color-gen} ($q=4$).}
  \label{fig:few-color-gen}
\end{figure}

We show that $\langle G, s, t \rangle$ is a yes-instance of $(s,t)$-\textsc{Hamiltonian Path}
if and only if  
$\langle H, f, f', k \rangle$ is a yes-instance of \sts{}.

First assume that $G$ has a Hamiltonian path $P = \langle v_{1}, v_{2}, \dots, v_{n} \rangle$ from $v_{1} = s$ to $v_{n} = t$.
Let $W = \langle b_{s}, s, \overrightarrow{C}_{v_{2}}, \overrightarrow{C}_{v_{3}}, \dots, \overrightarrow{C}_{v_{n-1}}, t, b'_{t}  \rangle$
be a walk in $H$ such that 
for $2 \le i \le n-1$, $\overrightarrow{C}_{v_{i}}$ is the cyclic walk on $C_{v_{i}}$
such that it starts and ends at $v_{i}$ 
and proceeds in the ordering that $b_{v_{i}}$ is visited right before the second visit of $v_{i}$.
The swapping sequence along $W$ brings the black token on $b_{s}$ to $b'_{t}$
and moves the black token on $b_{v}$ to $b'_{v}$ for each $v \in V \setminus \{s,t\}$.
That is, the coloring obtained is~$f'$.
The length of the sequence is $4 + 2(|V|-2) + (q-1)(|V|-2) - 1 = (q+1)(|V|-2) + 3$
since it visits $b_{s}$, $s$, $t$, and $b'_{t}$ once,
each $v \in V \setminus \{s,t\}$ twice,
and all vertices in $V(C_{v}) \setminus \{v\}$ for $v \in V \setminus \{s,t\}$ once.

Next assume that $\langle H, f, f', k \rangle$ is a yes-instance of \sts{}.
Let $W$ be a walk corresponding to a swapping sequence of length at most $k$ from $f$ to $f'$.
Observe that $W$ has to start at $b_{s}$ and end at $b'_{t}$
since otherwise one cannot get rid of the black token in $C_{s}$
nor place a black token in $C_{t}$.
Since $f(b_{v}) \ne f'(b_{v})$ for each $v \in V \setminus \{s,t\}$,
$W$ has to visit $C_{v}$ for each $v \in V \setminus \{s,t\}$.
Furthermore, $W$ has to visit all vertices of $C_{v}$ for each $v \in V \setminus \{s,t\}$
since otherwise the moves in $C_{v}$ cancel out and thus the black token in $C_{v}$ cannot move.
Now consider the walk $W_{G}$ obtained from $W$ by taking the edges of $G$ only.
From the discussion above, we can bound the length of $W_{G}$ as
$|W_{G}|-1 \le |W|-1 - (1 + 1 + (|V|-2)q) \le k - (2 + (|V|-2)q) \le |V| - 1$.
Observe that $W_{G}$ starts at $s$, visits all vertices in $V \setminus \{s,t\}$, and ends at $t$.
This implies that $W_{G}$ is a Hamiltonian path of $G$ from $s$ to $t$.
\end{proof}

The second tool uses the hardness of \textsc{Steiner Tree} 
to show the hardness of the colorful case of \sts{}.
In this case, we ask $f$ to be injective
and call this condition the \emph{colorful condition}.

For a walk $W$, let $V(W)$ be the set of vertices in~$W$.
\begin{lemma}
\label{lem:non-changing-walk}
Let $f$ be an injective coloring of a graph $G$.
If a walk $W$ in $G$ corresponds to a swapping sequence from $f$ to $f$ itself,
then $|V(W)| \le (|W|+1)/2$.
\end{lemma}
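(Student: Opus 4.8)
The plan is to translate the hypothesis into a statement about permutations and then exploit that the net permutation is the identity. Writing $W = \langle w_{1}, \dots, w_{p} \rangle$ with $p = |W|$, the swapping sequence consists of the $p-1$ transpositions $\tau_{i} = (w_{i}\, w_{i+1})$ applied in order, and its net effect on the token positions is the permutation $\sigma = \tau_{1} \tau_{2} \cdots \tau_{p-1}$ (each $\tau_{i}$ is a genuine transposition since $\{w_{i}, w_{i+1}\} \in E$). Since $W$ takes $f$ back to $f$ and $f$ is injective, $f \circ \sigma = f$ forces $\sigma = \mathrm{id}$. Thus the whole problem reduces to a purely combinatorial claim: if $p-1$ transpositions whose underlying graph is connected on a vertex set of size $n' \coloneqq |V(W)|$ multiply to the identity, then $p - 1 \ge 2(n'-1)$; rearranging gives exactly $n' \le (p+1)/2$. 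Here the graph spanned by the edges $\{w_{i}, w_{i+1}\}$ is connected with vertex set precisely $V(W)$, because consecutive walk vertices are adjacent.

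To prove $p-1 \ge 2(n'-1)$ I would run two bookkeeping processes on the ordered list $\tau_{1}, \dots, \tau_{p-1}$ in parallel. The first tracks the cycles of the partial products $\pi_{i} = \tau_{1} \cdots \tau_{i}$, using the standard fact that multiplying a permutation by a transposition $(x\,y)$ changes the number of cycles by $+1$ if $x,y$ lie in the same cycle (a \emph{split}) and by $-1$ otherwise (a \emph{merge}). Since $\pi_{0} = \mathrm{id}$ and $\pi_{p-1} = \sigma = \mathrm{id}$ both have $n'$ cycles, the numbers of splits and merges coincide, so $p-1 = 2\mu$, where $\mu$ is the number of merges. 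The second process is a monotone union--find over $V(W)$ that simply unions the endpoints of each $\tau_{i}$; since the underlying graph is connected, exactly $n'-1$ of these steps join two distinct components.

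The heart of the argument, and the step I expect to need the most care, is to show $\mu \ge n'-1$ by proving that every component-joining step of the union--find is also a merge for the permutation process. This rests on the observation that each cycle of a partial product $\pi_{i-1}$ is contained in a single connected component of the graph formed by $\tau_{1}, \dots, \tau_{i-1}$, since the group these transpositions generate respects that component decomposition. Hence, at a step where $\tau_{i} = (x\,y)$ joins two previously distinct components, $x$ and $y$ necessarily lie in different cycles of $\pi_{i-1}$, so that step is a merge. Consequently $\mu \ge n'-1$, and therefore $p-1 = 2\mu \ge 2(n'-1)$, which rearranges to $|V(W)| \le (|W|+1)/2$ as claimed. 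The only subtlety to handle carefully is the second observation about generated subgroups preserving components; everything else is routine.
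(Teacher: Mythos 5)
Your proof is correct, and it takes a genuinely different route from the one in the paper. The paper argues by induction on $|W|$: it finds a vertex $v$ visited exactly once, uses the injectivity of $f$ to force the walk-predecessor and walk-successor of $v$ to coincide, excises the resulting subwalk $\langle u, v, u\rangle$, and applies the induction hypothesis to a walk that is shorter by $2$ and spans one fewer vertex. You instead reduce the statement to a classical fact about permutations: a product of $m$ transpositions equal to the identity whose transposition graph is connected on $n'$ elements must have $m \ge 2(n'-1)$, which you establish by pairing the cycle-count bookkeeping (equal numbers of splits and merges, so $m = 2\mu$) with a union--find process, and by observing that every cycle of a partial product lies inside a single component of the already-used transpositions, so each of the $n'-1$ component-joining steps is necessarily a merge. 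All the steps check out, including the one you flag as delicate (each generator fixes every component setwise, hence so does any partial product, hence its orbits and in particular its cycles). What your approach buys is generality: it never uses the walk structure beyond connectivity of the swap graph, so it in fact bounds the number of swaps for \emph{any} edge-swap sequence with connected support that returns an injective coloring to itself, not just sequences arising from a walk. What the paper's induction buys is brevity and self-containedness: it needs no permutation-group machinery and stays entirely at the level of tokens moving along the walk.
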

\begin{proof}
We use induction on $|W|$. A walk $W$ with $|W| = 1$ satisfies the statement.

Assume that $|W| > 1$ and the statement holds for all strictly shorter walks.
If each vertex in $V(W)$ appears at least twice in $W$, then $|V(W)| \le |W|/2$ holds.
Thus assume that there exists a vertex $v \in V(W)$ that appears only once in $W$.
Since $f$ is injective, $W$ has to start and end at the same vertex,
and thus $v$ cannot be the first (and last) vertex in $W$.
Let $u$ and $w$ be the vertex right before and after $v$ in $W$, respectively.

When the swapping sequence along $W$ visits $v$, the token of unique color $f(v)$ moves to $u$.
The next move brings the token on $w$ to $v$, and after that the color of $v$ never changes.
This implies that indeed $u = w$.
Let $W'$ be the walk obtained from $W$ by replacing the subwalk $\langle u, v, u\rangle$ with $\langle u \rangle$.
Observe that $|W'| = |W|-2$ and $|V(W')| = |V(W)| - 1$.
By the induction hypothesis, $|V(W')| \le (|W|'+1)/2$ holds.
This implies that $|V(W)| - 1 \le ((|W|-2)+1)/2$,
and thus $|V(W)| \le (|W|+1)/2$.
\end{proof}

For a graph $G = (V,E)$ and a set $K \subseteq V$,
the subgraph $T$ of $G$ is a \emph{Steiner tree} if it is a tree and contains all vertices in $K$.
\begin{description}
  \item[Problem:] \textsc{Steiner Tree}
  \item[Input:] A graph $G = (V, E)$, a set $K \subseteq V$ with $|K| \ge 2$, and an integer $\ell$.
  \item[Question:] Is there a connected subgraph $T$ of $G$ such that $K \subseteq V(T)$ and $|E(T)| \le \ell $?
\end{description}

\begin{theorem}
\label{thm:colorful-gen}
\sts{} with the colorful condition is NP-complete on a graph class $\mathcal{C}$
if the following conditions are satisfied:
\begin{enumerate}
  \item \textsc{Steiner Tree} is NP-complete on $\mathcal{C}$;
  \item there is an integer $q \ge 3$ such that $\mathcal{C}$ is closed under the operation 
  that attaches a cycle of length $q$ at a vertex.
\end{enumerate}
\end{theorem}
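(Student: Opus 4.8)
The plan is to reduce \textsc{Steiner Tree} on $\mathcal{C}$ to the colorful case of \sts{} on $\mathcal{C}$, following the gadget idea of \cref{thm:few-color-gen} but replacing the single black token in each attached cycle by a \emph{cyclic rotation} task, which is the natural colorful analogue. Given an instance $\langle G, K, \ell \rangle$ of \textsc{Steiner Tree}, I build $H$ by attaching a cycle $C_{v}$ of length $q$ at every terminal $v \in K$; by the closure assumption $H \in \mathcal{C}$. I give all vertices of $H$ distinct colors (the colorful condition) and set $f = f'$ on every vertex of $G$, while on each $C_{v}$ the coloring $f'$ is obtained from $f$ by shifting the $q-1$ non-attachment vertices one step around the cycle and keeping the color of $v$ fixed (this is a nontrivial permutation since $q \ge 3$). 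The budget is $k = 2\ell + q\,|K|$.

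For the forward direction, suppose $G$ has a Steiner tree $T$ with at most $\ell$ edges. I fix an Euler tour of $T$, a closed walk of length $2|E(T)|$ that starts and ends at a terminal and visits every vertex of $T$, in particular every terminal. Whenever the tour first reaches a terminal $v$, I insert one full loop around $C_{v}$. A direct check (as in the small cases) shows that looping once around $C_{v}$ returns the moving token to $v$ and cyclically shifts the outer tokens by one, i.e.\ realizes exactly the required change on $C_{v}$ while leaving $G$ untouched; and an Euler tour of a tree is a swapping sequence from $f|_{V(G)}$ to itself. Hence the resulting single walk transforms $f$ into $f'$ and has length $2|E(T)| + q|K| \le k$.

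The harder direction is to extract a small Steiner tree from a swapping sequence of length at most $k$. Let $W$ be such a walk. Since $f \ne f'$ on every $C_{v}$, the walk must visit each terminal, and the set of $G$-vertices it visits is connected and contains $K$, hence spans a Steiner tree; writing $\ell^{*}$ for the minimum Steiner-tree size, this set has at least $\ell^{*}+1$ vertices. The key structural observation is that each detour of the moving token into a cycle $C_{v}$ is \emph{transparent} to the tokens of $G$: the moving token always carries the same color and re-enters $G$ only at $v$, so deleting the cycle-segments and concatenating the remaining $G$-segments yields a walk $W_{G}$ in $G$ whose swapping sequence maps $f|_{V(G)}$ to itself. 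Applying \cref{lem:non-changing-walk} to $W_{G}$ gives that the portion of $W$ inside $G$ has length at least $2(\ell^{*}+1)-2 = 2\ell^{*}$. Separately, I show that the portion of $W$ inside each $C_{v}$ uses at least $q$ edges, since producing a nontrivial cyclic rotation of the outer vertices forces the moving token to wind once around $C_{v}$. Summing, $|W|-1 \ge 2\ell^{*} + q|K|$, and since $|W|-1 \le k = 2\ell + q|K|$ we get $\ell^{*} \le \ell$, so $G$ has a Steiner tree of size at most $\ell$.

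The main obstacle is making the per-cycle lower bound of $q$ edges fully rigorous, which is delicate precisely because \sts{} leaves the endpoints of $W$ free, so $W$ might start or end inside a cycle. I expect to handle it by a winding-number argument: any sub-walk confined to $C_{v}$ that rotates the outer tokens by one must wind once around $C_{v}$ and therefore traverse all $q$ cycle edges, whereas a sub-walk using at most $q-1$ of them stays on a path and, by the reasoning behind \cref{lem:non-changing-walk}, can only induce the identity on $C_{v}$. To control the endpoints, I use that the moving token must finish at a vertex whose target color under $f'$ equals its own color; together with $f=f'$ on $G$ this forces the walk to effectively begin and end consistently and confines any ``open'' cycle traversal to a single cycle, whose small deficit is absorbed into the construction. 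Closing this case analysis cleanly is where most of the care will go.
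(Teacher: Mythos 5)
Your proposal matches the paper's proof essentially step for step: the same gadget (a length-$q$ cycle attached at each terminal with its $q-1$ outer colors cyclically shifted), the same budget $k = 2\ell + q|K|$, the same double-traversal tree walk with inserted cycle loops for the forward direction, and the same use of \cref{lem:non-changing-walk} on the $G$-restricted walk for the converse. The one step you leave as a sketch --- the per-cycle lower bound of $q$ traversed edges when the walk starts and ends inside some $C_{v}$ --- is closed in the paper by a short contradiction rather than a winding-number argument: if some edge of $C_{v}$ is never traversed, then (since all outer vertices of $C_{v}$ must be visited) the walk starts at one endpoint of that edge and ends at the other, which forces some vertex of $V(C_{v})\setminus\{v\}$ to carry the color $f(v)$ in the final coloring, contradicting the injectivity of the colorings together with $f(v)=f'(v)$.
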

\begin{proof}
Let $G = (V,E)$ be a graph in $\mathcal{C}$, $K \subseteq V$, and $\ell$ be an integer.
Let $H \in \mathcal{C}$ be the graph obtained from $G$ by attaching a cycle of length $q$ at each vertex in $K$.
For each $v \in K$, let $C_{v}$ denote the cycle attached at $v$.
We set $k = 2 \ell + q |K|$.
Let $f$ be an injective coloring of $G$
and $f'$ be the coloring obtained from $f$ by cyclically shifting one step (in an arbitrary direction) the colors of the vertices in $V(C_{v}) \setminus \{v\}$
for each $v \in K$. See \cref{fig:colorful-gen}.

\begin{figure}[tbh]
  \centering
  \includegraphics[scale=.9]{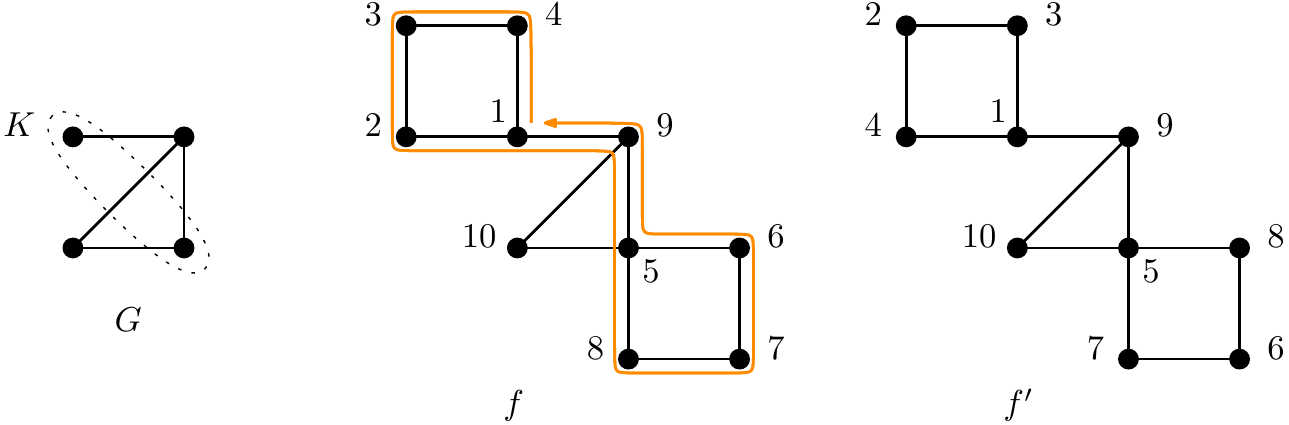}
  \caption{The reduction for \cref{thm:colorful-gen} ($q=4$).}
  \label{fig:colorful-gen}
\end{figure}

We show that $\langle G, K, \ell \rangle$ is a yes-instance of \textsc{Steiner Tree}
if and only if  
$\langle H, f, f', k \rangle$ is a yes-instance of \sts{}.

\paragraph{The only-if direction.}
First assume that there is a tree $T$ such that $T$ is a subgraph of $G$, $K \subseteq V(T)$, and $|E(T)| \le \ell$.
Let $W$ be a walk on $T$ that visits each edge of $T$ exactly twice.
For each $v \in K$, we expand $W$ by inserting a walk right after the first occurrence of $v$ in $W$
such that the inserted walk is a cycle through $C_{v}$ that 
has direction opposite to the cyclic shift applied to $C_{v}$ when we constructed $f'$.
We can see that the swapping sequence along this walk obtains $f'$.
The length of the sequence is $2 |E(T)| + q |K| \le 2\ell + q |K| = k$.

\paragraph{The if direction.}
Next assume that $\langle H, f, f', k \rangle$ is a yes-instance of \sts{}.
Let $W$ be a walk corresponding to a swapping sequence of length at most $k$ from $f$ to $f'$.
Let $R$ be the set of vertices that $W$ visits.
From $W$, we construct a walk $W'$ by taking the vertices of $G$ only.

If $W$ does not start in $V(C_{v}) \setminus \{v\}$ for some $v \in K$,
then $W$ contains a cyclic subwalk that visits $v$, $V(C_{v}) \setminus \{v\}$, and $v$ again in this ordering.
In $W'$, this cyclic subwalk is replaced with the trivial walk $\langle v \rangle$.
This decreases the length of the walk by $q$.

If $W$ starts at a vertex in $V(C_{v}) \setminus \{v\}$ for some $v \in K$,
it has to end at a vertex in $V(C_{v}) \setminus \{v\}$.
We claim that $W$ contains the edges of $C_{v}$ at least $q$ times in total.
Suppose to the contrary that this is not the case.
Then, there is an edge $e$ in $C_{v}$ that is not included in $W$.
Since $W$ has to visit all vertices in $V(C_{v}) \setminus \{v\}$ as they have different colors in $f$ and $f'$,
$W$ contains all other edges in $C_{v}$.
This implies that $W$ starts at an endpoint of the missing edge $e$,
leaves $C_{v}$, visits some other vertices and comes back to $C_{v}$,
and then proceeds in $C_{v}$ and ends at the other endpoint of $e$ (see \cref{fig:colorful-gen-bad}).
In this case, some vertex in $V(C_{v}) \setminus \{v\}$ will be colored with $f(v)$ in the final coloring.
This contradicts the assumption that $f$ is injective and $f(v) = f'(v)$.
Hence, this cycle also contributes to the shrinking of the length of the walk by at least $q$.

\begin{figure}[tbh]
  \centering
  \includegraphics[scale=.9]{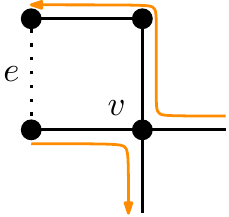}
  \caption{A walk contains the edges of $C_{v}$ at least $q$ times in total.}
  \label{fig:colorful-gen-bad}
\end{figure}

The discussions above imply that $|W'| \le |W| - q|K|$.
Since $|W| \le k+1 = 2 \ell+q|K| + 1$, we have $|W'| \le 2 \ell + 1$.
Observe that $W'$ is a swapping sequence from $f$ to $f$ itself.
By \cref{lem:non-changing-walk}, it holds that $|V(W')| \le (|W'|+1)/2 \le \ell + 1$.
Since $G[V(W')]$ is connected, it has a spanning tree of at most $\ell$ edges.
\end{proof}

It is known that \textsc{Steiner Tree} is NP-complete on 
chordal graphs~\cite{WhiteFP85} and
chordal bipartite graphs~\cite{MullerB87}.
It is also known that $(s,t)$-\textsc{Hamiltonian Path} is NP-complete on chordal graphs and chordal bipartite graphs~\cite{Muller96a}.
Observe that chordal graphs and chordal bipartite graphs are closed under the operations 
that attach a cycle of length $3$ and $4$, respectively.
Thus, \cref{thm:few-color-gen,thm:colorful-gen} implies the hardness on them.
\begin{corollary}
\sts{} is NP-complete on chordal graphs and on chordal bipartite graphs in 
both the colorful and few-color cases.
\end{corollary}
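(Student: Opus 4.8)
The plan is to derive all four NP-completeness results as direct consequences of the two general reductions established earlier, namely \cref{thm:few-color-gen} for the few-color case and \cref{thm:colorful-gen} for the colorful case. Each of these theorems has two hypotheses: the NP-completeness of an auxiliary problem on the target class ($(s,t)$-\textsc{Hamiltonian Path} for \cref{thm:few-color-gen}, \textsc{Steiner Tree} for \cref{thm:colorful-gen}), together with closure of the class under attaching a cycle of some fixed length $q \ge 3$. I would therefore treat the four statements (two classes $\times$ two cases) uniformly by checking these two hypotheses for chordal graphs and for chordal bipartite graphs.

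For the auxiliary-problem hypotheses I would simply cite the known results: \textsc{Steiner Tree} is NP-complete on chordal graphs and on chordal bipartite graphs, and $(s,t)$-\textsc{Hamiltonian Path} is NP-complete on both classes as well. The only genuinely new thing to verify is the closure property, which I would handle separately for the two classes. For chordal graphs I would take $q = 3$: attaching a triangle at a vertex $v$ introduces two new vertices, each of degree $2$ whose neighbors lie inside the attached triangle. Hence every induced cycle passing through a new vertex is exactly that triangle, so no induced cycle of length $4$ or more is created and chordality is preserved. For chordal bipartite graphs I would take $q = 4$: attaching a $4$-cycle at $v$ adds three new vertices of degree $2$, and I would first check that bipartiteness is preserved (the $4$-cycle is bipartite and can be $2$-colored consistently with $v$), and then that any induced cycle through a new vertex is the attached $4$-cycle itself, so no induced cycle of length $6$ or more appears.

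With both hypotheses verified, \cref{thm:few-color-gen} yields NP-completeness of the few-color case and \cref{thm:colorful-gen} yields NP-completeness of the colorful case, on each of the two classes, completing the corollary. I expect the main (though minor) obstacle to be the chordal-bipartite closure argument, since there one must simultaneously maintain bipartiteness and rule out all induced cycles of length at least $6$; the low degree of the attached vertices is precisely what makes this tractable, as it forces any cycle meeting the gadget to coincide with the attached $4$-cycle.
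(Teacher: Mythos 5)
Your proposal is correct and follows essentially the same route as the paper: cite the known NP-completeness of \textsc{Steiner Tree} and $(s,t)$-\textsc{Hamiltonian Path} on both classes, observe closure under attaching a cycle of length $3$ (chordal) resp.\ $4$ (chordal bipartite), and apply \cref{thm:few-color-gen,thm:colorful-gen}. Your explicit degree-two argument for the closure property is just a fleshed-out version of the paper's one-line observation.
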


\subsection{The few-color case on grid-like graphs}

Recall that a graph is a grid graph if it is an induced subgraph of a grid.
A bipartite graph is \emph{balanced} if it admits a proper 2-coloring 
such that the color classes have the same size.
Note that a grid graph is bipartite.

It is known that \textsc{Hamiltonian Cycle} is NP-complete on grid graphs~\cite{ItaiPS82}.
The next lemma follows easily from this fact.
\begin{lemma}
\label{lem:balanced-ham-path}
\textsc{Hamiltonian Path} is NP-complete on balanced grid graphs given with grid representations.
\end{lemma}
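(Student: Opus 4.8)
The plan is to reduce from \textsc{Hamiltonian Cycle} on grid graphs, which is NP-complete by~\cite{ItaiPS82}; since \textsc{Hamiltonian Path} is clearly in NP, a polynomial-time reduction producing balanced grid graphs with grid representations suffices. Given a grid graph $G$ with its grid representation, I would first locate the lexicographically smallest vertex $v = (x_{0}, y_{0})$ (smallest first coordinate, then smallest second coordinate). By minimality, every cell with first coordinate less than $x_{0}$ and every cell $(x_{0}, y)$ with $y < y_{0}$ is unoccupied, so $v$ has a large free quadrant to its lower-left, and its only possible neighbours in $G$ are $a = (x_{0}+1, y_{0})$ and $b = (x_{0}, y_{0}+1)$. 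Since a Hamiltonian cycle forces minimum degree $2$, if $\deg_{G}(v) \ne 2$ I map the instance to a fixed balanced grid graph with no Hamiltonian path (e.g.\ two disjoint edges). Otherwise both $a$ and $b$ are present, and because $v$ has degree exactly $2$, every Hamiltonian cycle of $G$ uses the edges $va$ and $vb$; hence $G$ has a Hamiltonian cycle if and only if $G - v$ has a Hamiltonian path from $a$ to $b$.

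Next I would realise the ``from $a$ to $b$'' condition as an ordinary Hamiltonian-path instance using the free quadrant. Removing $v$ frees the cell $(x_{0}, y_{0})$, which is adjacent to both $a$ and $b$. I attach to $a$ a tail $a, (x_{0}, y_{0}), (x_{0}-1, y_{0}), (x_{0}-1, y_{0}-1), \dots$ and to $b$ a tail $b, (x_{0}-1, y_{0}+1), (x_{0}-2, y_{0}+1), \dots$, each snaking into the free quadrant and ending at a degree-one vertex, so that these two tail ends are the \emph{only} degree-one vertices of the resulting graph $H$. Consequently any Hamiltonian path of $H$ must use them as its endpoints. The routing unavoidably creates two extra edges near the corner, namely $\{(x_{0}, y_{0}), b\}$ and $\{(x_{0}-1, y_{0}), (x_{0}-1, y_{0}+1)\}$, but apart from these the tails meet $G - v$ only at $a$ and at $b$.

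The main obstacle is to show that these two extra corner edges do not introduce spurious Hamiltonian paths. I would argue that each tail is a chain of degree-two vertices ending at a degree-one vertex, so in any Hamiltonian path it must be traversed from its free end inward; a short case analysis at the corner then shows that the walk can continue to a Hamiltonian path only by entering $a$, crossing a Hamiltonian path of $G - v$ to $b$, and leaving along the $b$-tail, since every alternative (using a corner edge to shortcut between the tails) strands some vertex of $G - v$ or of a tail. This yields that $H$ has a Hamiltonian path if and only if $G - v$ has a Hamiltonian path from $a$ to $b$, i.e.\ if and only if $G$ has a Hamiltonian cycle. Finally, $H$ is a grid graph and hence bipartite according to the parity of $x + y$; extending either tail by one cell in the free quadrant changes the difference between the two colour-class sizes by exactly one without affecting the correctness argument, so I can choose the tail lengths to make $H$ balanced. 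As everything is placed explicitly on $\mathbb{Z}^{2}$, the reduction outputs $H$ together with a grid representation, completing the proof.
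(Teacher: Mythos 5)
Your reduction is correct and is essentially the same as the paper's: both start from \textsc{Hamiltonian Cycle} on grid graphs, exploit a degree-two extremal vertex whose two incident edges are forced in any Hamiltonian cycle, and attach pendant paths in free cells so that the two degree-one tail ends become the forced endpoints of a Hamiltonian path. The paper achieves this with a fixed four-vertex gadget at the top-right corner (which automatically preserves balance, since $G$ may be assumed balanced), whereas you use variable-length tails with an explicit parity adjustment; this is only a cosmetic difference.
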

\begin{proof}
We show the NP-hardness by giving a reduction from 
\textsc{Hamiltonian Cycle} on grid graphs given with grid representations,
which is known to be NP-complete~\cite{ItaiPS82}.
Let $G = (V, E)$ be an instance of this problem.
We assume that $G$ is balanced and has minimum degree at least $2$
as otherwise $G$ does not have any Hamiltonian cycle.

Let $v = (m_{x}, m_{y}) \in V$ be the top vertex in the rightmost column;
that is, $m_{x} = \max\{x \mid (x,y) \in V\}$ and $m_{y} = \max\{y \mid (m_{x},y) \in V\}$.
By the assumption of the minimum degree, $v$ has degree~$2$.
Let $u = v + (-1,0)$ and $w = v + (0,-1)$ be the neighbors of $v$ (see \cref{fig:ham-c-p} (left)).
Let $H$ be the graph obtained from $G$ by adding four vertices
$s = v+(1,1)$, $s' = v+(1,0)$, $t = v+(2,-1)$, and $t' = v+(1,-1)$ into the grid representation
(see \cref{fig:ham-c-p} (right)).
Clearly, $H$ is balanced.

Observe that every Hamiltonian cycle of $G$ (if any exists) contains the path $\langle u,v,w \rangle$, and
every Hamiltonian path of $H$ (if any exists) contains the paths $\langle s, s', v, u\rangle$ 
and $\langle w, t', t\rangle$.
Since a Hamiltonian cycle of $G$ and a Hamiltonian path of $H$
play the same role in the subgraphs induced by $V \setminus \{v\}$,
we can conclude that $G$ has a Hamiltonian cycle if and only if $H$ has a Hamiltonian path.
\begin{figure}[tbh]
  \centering
  \includegraphics[scale=.85]{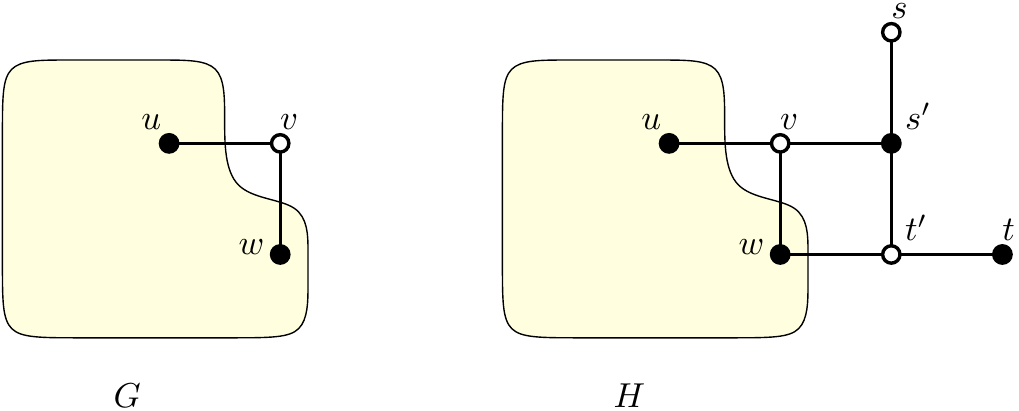}
  \caption{The graphs $G$ and $H$ in the proof of \cref{lem:balanced-ham-path}.}
  \label{fig:ham-c-p}
\end{figure}
\end{proof}

\begin{theorem}
\label{thm:kings-pd-even-constant-number-colors-np-complete}
For every fixed constant $c \ge 2$,
\sts{} with $c$ colors is NP-complete on king's graphs.
\end{theorem}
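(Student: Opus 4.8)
The plan is to give a polynomial reduction from \textsc{Hamiltonian Path} on balanced grid graphs presented with grid representations, which is NP-complete by \cref{lem:balanced-ham-path}; the two degree-$1$ vertices inserted in that reduction force the endpoints of any Hamiltonian path, so we may treat the source problem as having prescribed endpoints $s$ and $t$. The construction follows the blueprint behind \cref{thm:few-color-gen}: encode each vertex of the source graph by a two-colour ``signal cycle'' whose black token must be advanced one step, so that a length-bounded swapping sequence is forced to traverse every signal cycle and to travel between them along a Hamiltonian path. However, \cref{thm:few-color-gen} cannot be applied as a black box, because king's graphs are not closed under attaching a cycle at a vertex. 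The point of the proof is therefore to realise both the base graph and the signal cycles \emph{inside} one rectangular king's graph, which I would do by exploiting its even/odd parity structure.

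The structural fact I would use is that, in a king's graph, two even vertices are adjacent only through a diagonal edge, and the map $(x,y)\mapsto((x+y)/2,(x-y)/2)$ carries diagonal neighbours to orthogonally adjacent grid points; consequently the king-distance between two even cells equals the $\ell_1$-distance of their images. Given a grid graph $G$, I would place its vertices on even cells via the inverse of this map (scaled by a small constant for room), so that two images are at king-distance equal to the plane $\ell_1$-distance of the corresponding vertices -- in particular $G$-adjacent vertices give the strictly closest pairs, and the images of the ``holes'' of $G$ cause no spurious short connections. At each active even cell $p$ I would use the triangle $\{p,\,p+(1,0),\,p+(0,1)\}$ (two grid edges and one diagonal) as an internal signal cycle of length $3$: its two odd cells carry the black signal token, and the detour $\langle p, p+(0,1), p+(1,0), p\rangle$ advances that token one step while returning the moving (black) token to $p$, exactly as the cyclic walks do in \cref{thm:few-color-gen}. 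The board is then completed to a full rectangular king's graph and all remaining cells are coloured white in both $f$ and $f'$. As in \cref{thm:few-color-gen}, every active cell except $t$ starts with one black token in its triangle, every active cell except $s$ ends with its black token advanced by one step, and the single black token of $s$'s triangle is transported to $t$'s triangle. The budget $k$ is set to the length of the intended walk: the scaled Hamiltonian-path cost between the images, plus $3$ for each traversed triangle, plus an $O(1)$ correction at $s$ and $t$.

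For the forward direction, a Hamiltonian path of $G$ yields precisely this walk: the moving token slides between consecutive images through the (white) filler cells, which merely translates the black token and leaves the colouring unchanged, and at each image it detours once through the triangle to advance the local signal token. For the converse, I would mimic the backward argument of \cref{thm:few-color-gen}. A swapping sequence of length at most $k$ must start at $s$'s black cell and end at $t$'s black cell, since these are the only cells whose local black count differs between $f$ and $f'$; moreover, because the signal token in each triangle has to move, the moves inside that triangle cannot cancel, so every triangle is fully traversed and contributes $3$ to the length. Subtracting this fixed amount leaves at most the scaled Hamiltonian-path budget for the travel between images. Since all active cells must be visited (to repair their signal tokens) and, by the distance fact above, consecutive images cost at least the scaled $\ell_1$-distance -- with equality only for $G$-edges -- the residual budget forces the inter-image walk to be a Hamiltonian path of $G$. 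Finally, the passage from $c=2$ to arbitrary fixed $c\ge 2$ is handled as in \cref{thm:few-color-gen}, by installing additional triangle gadgets that use the extra colours.

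The hard part is the converse, specifically ruling out cheaper ``cheating'' sequences that exploit the many extra vertices of the full board -- the odd cells lying between images and the even cells standing for the holes of $G$. The entire design is aimed at this: the parity fact makes even--even king-adjacency coincide exactly with adjacency in $G$, so there is no unit-length connection between non-adjacent images, and routing through any odd or filler cell is strictly longer; with a tightly tuned budget, the only affordable behaviour is to traverse each signal triangle once and to move between images along genuine $G$-edges. Getting the scaling and the budget exactly right, so that the inequality in the converse is tight precisely for Hamiltonian paths of $G$ and for nothing else, is the delicate part of the proof.
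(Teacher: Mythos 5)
Your construction diverges substantially from the paper's, and the converse direction has a concrete gap. The triangle gadgets you embed at the active even cells are not pendant the way the attached cycles of \cref{thm:few-color-gen} are: in a king's graph the two odd cells $p+(1,0)$ and $p+(0,1)$ of a gadget are themselves adjacent (via a diagonal edge) and are adjacent to many filler cells. Consequently the claim that ``the moves inside that triangle cannot cancel, so every triangle is fully traversed and contributes $3$ to the length'' is false: a walk that simply takes the single diagonal step from $b'_v=p+(0,1)$ to $b_v=p+(1,0)$ swaps the moving black token with the signal token and leaves the gadget in exactly the configuration required by $f'$, at a cost of $1$ edge instead of $3$, without ever visiting $p$. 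Since the entry and exit points of this cheap fix lie within king-distance $1$ of $p$, it perturbs the inter-gadget travel by only $O(1)$ per gadget while saving $2$ edges per gadget, i.e., roughly $2|V|$ units of slack against your budget $k$. With the images placed at a ``small constant'' scale apart, that slack is more than enough to pay for visiting the gadgets in a non-Hamiltonian order, so the equivalence you need in the converse does not follow from the stated accounting. You correctly flag the budget tuning as the delicate point, but the proposal does not resolve it, and as written the reduction is not sound.

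For comparison, the paper avoids gadgets entirely. It takes the balanced grid graph $G$ with its grid representation, fills the bounding box to obtain a full king's graph $H$, lets $f$ be the global parity $2$-coloring of $H$, lets $f'$ flip the colors on exactly the vertices of $V(G)$, and sets $k=|V(G)|-1$. A Hamiltonian path of $G$ realizes this flip because consecutive vertices (and, by balancedness, the two endpoints) have opposite parities. Conversely, since every vertex of $V(G)$ changes color it must be visited, and the budget forces the walk to visit each such vertex exactly once and nothing else; a diagonal edge $\{w_i,w_{i+1}\}$ would join same-parity vertices, giving $f'(w_i)=f(w_{i+1})=f(w_i)$ and contradicting $f(w_i)\neq f'(w_i)$, so the walk uses only grid edges and is a Hamiltonian path of $G$. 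The tightness of $k=|V|-1$ and the parity argument do all the work that your gadget-and-scaling machinery is meant to do, without any of the routing issues above. If you want to salvage your approach, you would need gadgets whose repair genuinely cannot be short-circuited by the extra king's-graph adjacencies, which is precisely what the parity coloring sidesteps.
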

\begin{proof}
We prove the theorem only for the case where $c = 2$.
For $c > 2$, we add $c-2$ new vertices to $G$ defined below
and for each new vertex, set a new color as its initial and target colors. Then the proof works as it is.

We prove the NP-hardness by a reduction from \textsc{Hamiltonian Path} on balanced grid graphs 
(see \cref{lem:balanced-ham-path}).

Let $G = (V, E)$ be a balanced grid graph given with a grid representation.
We assume that $G$ is connected.
From $G$, we construct an instance $\langle H, f, f', k \rangle$ of \sts.
We set $k = |V|-1$.

Let $\min_{\mathbf{x}} = \min\{x \in \mathbb{Z} \mid (x, y) \in V\}$
and $\min_{\mathbf{y}} = \min\{y \in \mathbb{Z} \mid (x, y) \in V\}$.
We also define $\max_{\mathbf{x}}$ and $\max_{\mathbf{y}}$ in the analogous ways.
Let $U = \{(x,y) \in \mathbb{Z}^{2} \setminus V \mid \min_{\mathbf{x}} \le x \le \max_{\mathbf{x}}, \,
 \min_{\mathbf{y}} \le x \le \max_{\mathbf{y}}\}$.
The grid graph represented by $U \cup V$ is a grid and has size polynomial in $|V|$.
From this grid, we obtain $H$ by adding all diagonal edges of the unit squares.
Note that $H$ is a king's graph.

Let $f$ be the coloring of $H$ that maps the odd vertices to $1$ and the even vertices to $2$.
Let $f'$ be the coloring obtained from $f$ by reversing the colors of the vertices in the original grid graph $G$.
That is, 
$f'(v) = f(v)$ for $v \in U$,
$f'(v) = 1$ for $v \in V$ with $f(v) =2$, and
$f'(v) = 2$ for $v \in V$ with $f(v) =1$.

We show that $G$ has a Hamiltonian path
if and only if
$\langle H, f, f', k \rangle$ is a yes-instance of \sts.

\paragraph{The only-if direction.}
Let $P = \langle v_{1}, \dots, v_{|V|} \rangle$ be a Hamiltonian path of $G$.
Let $S = \langle f_{1}, \dots, f_{|V|} \rangle$ be a swapping sequence corresponding to $P$, where $f_{1} = f$.
Since each vertex in the walk is visited only once,
we have $f_{|V|}(v_{i}) = f(v_{i+1})$ for $1 \le i \le |V|-1$,
and $f_{|V|}(v_{|V|}) = f(v_{1})$.
Since $P$ is a path of $G$, $v_{i}$ and $v_{i+1}$ have different parities.
Also, since $G$ is balanced, $v_{|V|}$ and $v_{1}$ have different parities.
Therefore, $f_{|V|}$ is obtained from $f_{1}$ by changing the color of each vertex in $V$ to the other one.
That is, $f_{|V|} = f'$.

\paragraph{The if direction.}
Let $\langle f_{1}, \dots, f_{k'+1} \rangle$ be a swapping sequence between $f$ and $f'$ with $k' \le k$.
Let $W = \langle w_{1}, \dots, w_{k'+1} \rangle$ be the corresponding walk in $H$.
Since $f(v) \ne f'(v)$ for every $v \in V$, the moving token has to visit all vertices in $V$.
Furthermore, since $|W| = k'+1 \le k+1 = |V|$,
indeed the moving token visits each vertex in $V$ exactly once and does not visit other vertices (and thus, $k' = k$).
Hence, it suffices to show that $W$ is a walk also in $G$;
that is, each edge $\{w_{i}, w_{i+1}\}$ is not diagonal.
Suppose to the contrary that an edge $\{w_{i}, w_{i+1}\}$ in $W$ is diagonal.
This implies that $w_{i}$ and $w_{i+1}$ have the same parity, and thus $f(w_{i}) = f(w_{i+1})$.
Since $W$ visits a vertex at most once, $f(w_{i+1}) = f'(w_{i})$ has to hold.
Thus we have that $f(w_{i}) = f'(w_{i})$.
This contradicts the assumption that $f'(v) \ne f(v)$ for each $v \in V$.
\end{proof}

In the proof above,
we showed that no diagonal edge is used in shortest swapping sequences.
Therefore, the proof works without the diagonal edges.
\begin{corollary}
For every fixed constant $c \ge 2$,
\sts{} with $c$ colors is NP-complete on grids.
\end{corollary}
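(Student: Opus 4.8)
The plan is to reuse the reduction from the proof of \cref{thm:kings-pd-even-constant-number-colors-np-complete} almost verbatim, simply omitting the step that adds the diagonal edges. Starting from a balanced grid graph $G = (V,E)$ given with a grid representation (for which \textsc{Hamiltonian Path} is NP-complete by \cref{lem:balanced-ham-path}), I would fill in the holes of its bounding box to obtain the grid on $U \cup V$ and take this grid itself as the host graph $H$, without adding any diagonals. By the definition of a grid, $H$ is a grid, so the reduction targets the correct class. The colorings $f, f'$ and the budget $k = |V|-1$ are defined exactly as before, and for $c > 2$ the same trick of appending $c-2$ new vertices, each carrying a fresh fixed color as both its initial and target color, applies.

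For the only-if direction I would observe that a Hamiltonian path of $G$ traverses only edges of $G$, which are non-diagonal grid edges. Hence the swapping sequence built from it in the proof of \cref{thm:kings-pd-even-constant-number-colors-np-complete} uses only edges that remain present in $H$ after the diagonals are dropped, and it still produces $f'$. Thus this direction is unaffected.

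For the if-direction, the counting argument carries over unchanged: since $f(v) \ne f'(v)$ for every $v \in V$, the moving token must visit all $|V|$ vertices of $V$, and as the walk has length at most $k = |V|-1$, it visits each vertex of $V$ exactly once and visits no vertex of $U$. The only place where the original proof invoked the king's-graph structure was to rule out diagonal edges; here there are simply no diagonal edges to exclude, so that step becomes vacuous. Because $G$ is an induced subgraph of the grid, every grid edge between two vertices of $V$ is an edge of $G$; the walk therefore stays within $V$, uses only edges of $G$, and is exactly a Hamiltonian path of $G$.

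I do not expect any genuine obstacle, since the whole content of the corollary is the observation—already made in the proof of \cref{thm:kings-pd-even-constant-number-colors-np-complete}—that no shortest swapping sequence uses a diagonal edge. The only point worth checking is that deleting the diagonals does not enlarge the solution space in the if-direction, i.e., that a length-$k$ walk in the grid still cannot escape $V$; this is immediate from the length budget, so the argument goes through essentially verbatim and the corollary follows.
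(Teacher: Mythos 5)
Your proposal is correct and matches the paper's own argument: the paper derives this corollary by observing that the if-direction of \cref{thm:kings-pd-even-constant-number-colors-np-complete} already shows no shortest swapping sequence uses a diagonal edge, so the identical reduction (with the diagonals simply omitted) works on grids. Your additional remark that induced-ness of $G$ guarantees the resulting walk uses only edges of $G$ is the same point the original proof relies on, so there is nothing new or missing here.
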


Observe further that the proof of \cref{thm:kings-pd-even-constant-number-colors-np-complete} works
even if we add or remove an arbitrary set of edges connecting vertices of the same parity 
since we can just ignore them.
Now consider the graph obtained from a king's graph by removing all edges connecting even vertices
and adding all possible edges connecting odd vertices.
Such a graph is a split graph since the even vertices form an independent set and the odd vertices form a clique.
Thus it is hard on split graphs as well.
\begin{corollary}
For every fixed constant $c \ge 2$,
\sts{} with $c$ colors is NP-complete on split graphs.
\end{corollary}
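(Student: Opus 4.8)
The plan is to reuse the reduction behind \cref{thm:kings-pd-even-constant-number-colors-np-complete} with almost no change, exploiting the fact that its correctness proof never uses an edge joining two vertices of the same parity. Recall that reduction: from a balanced grid graph $G$ on vertex set $V$ we build a king's graph $H$, color $H$ by parity to get $f$, flip the colors on $V$ to get $f'$, and set $k = |V|-1$. The proof shows that any swapping sequence of length at most $k$ makes the moving token visit each vertex of $V$ exactly once and nothing else, and that every traversed edge joins vertices of opposite parity, since a same-parity step $\{w_{i}, w_{i+1}\}$ forces $f(w_{i}) = f(w_{i+1}) = f'(w_{i})$, contradicting $f(w_{i}) \ne f'(w_{i})$. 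As the opposite-parity adjacencies among $V$ are exactly the grid edges of $G$, such a walk is a Hamiltonian path of $G$, and conversely a Hamiltonian path of $G$ yields such a walk.

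First I would transform $H$ into a split graph $H'$ by deleting every edge between two even vertices and adding every missing edge between two odd vertices, while keeping all even--odd edges. Then the even vertices form an independent set and the odd vertices form a clique, so $H'$ is split. Crucially, every edge deleted or added joins two vertices of the same parity, so the set of opposite-parity edges---in particular every grid edge among the vertices of $V$---is identical in $H$ and in $H'$. I would then run the same reduction with $H'$ in place of $H$ and the same $f, f', k$. The only-if direction is immediate, as a Hamiltonian path of $G$ uses only grid (opposite-parity) edges, which survive in $H'$. For the if direction, the coloring argument recalled above applies verbatim in $H'$: the budget forces any short swapping sequence to visit each vertex of $V$ once, and the same-parity contradiction forbids every diagonal-type step, so the walk uses only grid edges of $G$ and yields a Hamiltonian path. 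The extension to any fixed $c \ge 2$ is handled exactly as for king's graphs.

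The one point that needs care---and essentially the only thing that could go wrong---is the claim that altering the same-parity edge set leaves the whole argument intact. I would therefore pin down exactly where graph structure enters the king's-graph proof: solely in requiring consecutive walk vertices to be adjacent and in identifying opposite-parity adjacencies with grid edges of $G$. Adding same-parity edges cannot create a new valid sequence, since such edges can never be traversed (the coloring contradiction holds regardless of whether the edge exists), and removing even--even edges cannot destroy a valid sequence, since an optimal sequence never used them. As the steps the proof forbids are precisely the same-parity ones and we changed only same-parity edges, neither ingredient is affected, so no separate hardness construction for split graphs is needed.
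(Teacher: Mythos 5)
Your proposal is correct and matches the paper's own argument essentially verbatim: the paper likewise observes that the proof of \cref{thm:kings-pd-even-constant-number-colors-np-complete} is unaffected by adding or removing same-parity edges, and then deletes all even--even edges and adds all odd--odd edges of the king's graph to obtain a split graph (even vertices an independent set, odd vertices a clique). Your additional care in isolating exactly where the graph structure enters the argument is a sound, if slightly more detailed, rendering of the same observation.
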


\subsection{The colorful case on grid-like graphs}

We now consider \sts{} with the colorful condition on grid-like graphs.

We first show the hardness on the ordinary grids.
Recall that \ststs{$s$}{$t$} with the colorful condition on grids are known as the generalized 15 puzzle (or the $(n^{2}-1)$ puzzle)
and shown to be NP-complete~\cite{RatnerW90,DemaineR18}. 
For \sts{}, we can use the reduction by Demaine and Rudoy~\cite{DemaineR18} almost directly with a small change.
Their reduction is from the following problem.
\begin{description}
  \item[Problem:] \textsc{Rectilinear Steiner Tree}
  \item[Input:] A set $P \subseteq \mathbb{Z}_{+}^{2}$ of integer points in the plane and an integer $\ell$.
  \item[Question:] Is there a tree $T$ on the plane that satisfies the following conditions?
    $T$ contains all points in $P$;
    every edge of $T$ is horizontal or vertical;
    the total length of the edges in $T$ is at most $\ell$.
\end{description}
\textsc{Rectilinear Steiner Tree} is known to be strongly NP-hard~\cite{GareyJ77},
and thus we assume that the maximum coordinate of the points in $P$
is bounded from above by a polynomial in $|P|$.

The high-level idea of the reduction in~\cite{DemaineR18} is to represent the integer points in the plane by a grid 
and then each point in $P$ by some local changes of the colors around the vertex corresponding to the point.
Then, a swapping sequence for this instance forms a rectilinear Steiner tree on the plane.
The difference between their setting and ours is that they can fix the starting and ending vertices, but we cannot.
This difference actually does not affect the correctness of their proof applied to our case.
To not repeat their argument here, we only give a proof sketch.

\begin{theorem}
\label{thm:grid-pd-all-color-different-np-complete}
\sts{} with the colorful condition is NP-complete on grids.
\end{theorem}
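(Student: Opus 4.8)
The plan is to adapt the reduction of Demaine and Rudoy~\cite{DemaineR18} from \textsc{Rectilinear Steiner Tree}, altering only the treatment of the walk's endpoints. Given an instance $\langle P, \ell \rangle$ whose coordinates are polynomially bounded (by strong NP-hardness), I take their grid $H$ spanned by a bounding box of $P$, an injective coloring $f$, and the coloring $f'$ obtained from $f$ by the local permutation gadget they install at the vertex of each point of $P$; the budget $k$ is set to their value, adjusted by an additive constant as explained below. Because \sts{} imposes no constraint on where the corresponding walk starts or ends, the only-if direction is inherited directly: a rectilinear Steiner tree $T$ with total length at most $\ell$ yields the swapping sequence built in~\cite{DemaineR18}, namely an Euler tour of the doubled tree interleaved with the gadget cycles, and this walk has length at most $k$ and is a valid certificate for \sts{} irrespective of its endpoints.

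For the if-direction I start from a swapping sequence of length at most $k$ with walk $W$ and let $R = V(W)$. As in the proof of \cref{thm:colorful-gen}, injectivity of $f$ together with $f \ne f'$ on every gadget forces $W$ to enter each gadget and permute all of its tokens, so $R$ contains every point of $P$ and the walk pays the full gadget cost. Deleting the cyclic detours that perform the gadget permutations leaves a walk $W'$ realizing a swapping sequence from $f$ to $f$ itself, so \cref{lem:non-changing-walk} bounds $|V(W')|$ by $(|W'|+1)/2$. Subtracting the forced gadget cost from $k$ and feeding the remainder into this bound shows that $G[R]$, restricted to the grid edges, is connected, contains $P$, and admits a spanning tree of total length at most $\ell$, that is, a rectilinear Steiner tree.

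The step I expect to be the crux is showing that the missing fixed endpoints cannot beat the tree bound by more than a constant. In the $(n^{2}-1)$ puzzle the blank is pinned, so the traced walk is closed and costs exactly twice the tree length, whereas an open walk could a priori save an entire spine of the tree. The colorful structure rules this out: the moving token carries the unique color $f(w_{1})$ and must be deposited at the final vertex $w_{|W|}$ with $f'(w_{|W|}) = f(w_{1})$. Since $f'$ permutes colors only locally inside each gadget, the color $f(w_{1})$ is native to the gadget (or free-space vertex) containing $w_{1}$, and the only vertex carrying it under $f'$ lies in that same gadget; hence $w_{1}$ and $w_{|W|}$ belong to a single constant-size gadget, with $w_{1} = w_{|W|}$ whenever $w_{1}$ is in free space. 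The walk is therefore closed up to an $O(1)$ detour, so \cref{lem:non-changing-walk} applies to its color-preserving part and the length bound differs from the closed case only by an additive constant, which I absorb into $k$. With this in hand, the equivalence between short swapping sequences and cheap rectilinear Steiner trees carries over from~\cite{DemaineR18} essentially verbatim.
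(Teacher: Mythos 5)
Your proposal takes essentially the same route as the paper: both adapt the Demaine--Rudoy reduction~\cite{DemaineR18} by installing the rotation gadget at \emph{every} point of $P$ (including the one Demaine and Rudoy reserve for $s=t$, which is why the paper's budget $k=(2\ell+1)c$ carries the extra additive term), both inherit the only-if direction essentially verbatim, and both defer the quantitative if-direction to the analysis of~\cite{DemaineR18}; your observation that injectivity of $f$ together with the locality of $f'$ pins $w_{1}$ and $w_{|W|}$ to a single constant-size gadget is exactly the point that makes the unspecified endpoints harmless. The one step I would rephrase is the excision of ``gadget detours'': unlike the pendant cycles in \cref{thm:colorful-gen}, the recolored vertices here have grid neighbours outside the gadget, so an arbitrary walk need not perform each rotation via a removable closed subwalk; the robust version (used explicitly in the paper's king's-graph argument, \cref{thm:kings-pd-all-color-different-np-complete}) counts directly that in a minimum-length sequence every visited vertex $v$ with $f(v)=f'(v)$ must be visited at least twice, which gives $|R|\le (k+1)/2+O(|P|)$ and hence the rectilinear Steiner tree without any surgery on the walk.
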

\begin{proof}
[sketch]
From an instance of $\langle P, \ell \rangle$ \textsc{Rectilinear Steiner Tree},
we construct an instance $\langle G, f, f', k \rangle$ of \sts{}.

Let $c = 18|P|$ and $D$ be the maximum coordinate of the points in $P$.
Let $G$ be the $((D+1) c) \times ((D+1) c)$ grid.
Let $f$ be an injective coloring of $G$,
and $f'$ be the coloring obtained from $f$ 
by rotating the colors of vertices $(cx, cy)$, $(cx+1, cy)$, and $(cx, cy+1)$ as shown
in \cref{fig:grid-pd-1} for each point $(x, y) \in P$.
We set $k = (2\ell+1) c$.

This construction is almost the same as the one in~\cite{DemaineR18}.
In their setting, they fix one vertex $p_{1} \in P$ arbitrarily and 
set both $s$ and $t$ to the vertex corresponding to $p_{1}$.
They omit the color rotation around the vertices corresponding to $p_{1}$
as it is anyway visited as $s$ and $t$.

\begin{figure}[htb]
  \centering
  \includegraphics[scale=0.7]{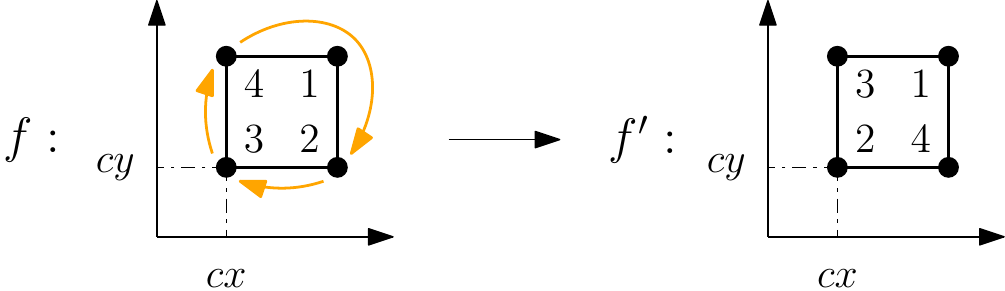}
  \caption{Rotating the colors around $(cx,cy)$.}
  \label{fig:grid-pd-1}
\end{figure}

First assume that $\langle P, \ell \rangle$ is a yes-instance of \textsc{Rectilinear Steiner Tree}.
Demaine and Rudoy~\cite{DemaineR18} showed that in this case there exists a swapping sequence $S$ such that:
\begin{itemize}
  \item the length is at most $2c \ell + 18(|P|-1)$;
  \item the moving token starts and ends at the vertex $(c x_{1}, c y_{1})$, where $(x_{1}, y_{1})$ is the coordinate of
  an arbitrarily chosen point $p_{1} \in P$;
  \item the obtained coloring is the same as $f$ for three vertices $(c x_{1}, c y_{1})$, $(c x_{1}+1, cy_{1})$ and $(cx_{1}, cy_{1}+1)$;
  \item the obtained coloring is the same as $f'$ for the other vertices.
\end{itemize}
We want to expand $S$ by inserting a subsequence that rotates
the colors of $(c x_{1}, c y_{1})$, $(c x_{1}+1, cy_{1})$, and $(cx_{1}, cy_{1}+1)$
and leaves the colors of the other vertices unchanged.
This is possible with at most $18$ steps~\cite{DemaineR18},
and thus we get a swapping sequence from $f$ to $f'$ with length at most $2c \ell + 18|P| = k$.

Next assume that $\langle G, f, f', k \rangle$ is a yes-instance of \sts{}.
For this direction, we can use the proof in~\cite{DemaineR18} as it is.
In the $1/c$ scale, we embed the edges used in a walk corresponding to a swapping sequence into the plane.
Then, using the analysis in~\cite{DemaineR18}, 
we can show that this walk (or a spanning tree of it) gives a rectilinear Steiner tree for $P$ with length at most $\ell$.
\end{proof}

Before showing the hardness of \sts{} with the colorful condition on king's graphs,
we first show that \textsc{Steiner Tree} is NP-complete on king's graphs
since the proofs are similar and this one is easier than the one for \sts.
Also, the result itself might be useful for connecting some graph problems and geometric problems.

\begin{theorem}
\label{thm:kings-steiner-np-complete}
\textsc{Steiner Tree} is NP-complete on king's graphs.
\end{theorem}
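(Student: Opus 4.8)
The plan is to reduce from \textsc{Rectilinear Steiner Tree}, which is strongly NP-hard~\cite{GareyJ77}, exploiting the classical fact that a $45^{\circ}$ rotation turns the $\ell_{1}$ metric into the $\ell_{\infty}$ metric, and that the $\ell_{\infty}$ metric is exactly the metric realized by edge-counts in a king's graph (two king-cells at $\ell_{\infty}$-distance $d$ are joined by a geodesic of precisely $d$ edges). Membership in NP is clear, so I focus on the reduction. Given an instance $\langle P, \ell \rangle$, I would apply the map $\varphi(x,y) = (x+y,\, x-y)$ to every point, obtaining a terminal set $K = \varphi(P)$ of integer points (after a translation making all coordinates positive). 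Since the source problem is strongly NP-hard, the coordinates of $P$, hence of $K$, are polynomially bounded, so $K$ sits inside an $h \times w$ king's graph $G$ of polynomial size; I set the budget to $\ell$ itself. The engine of correctness is the identity $\max(|a+b|,\,|a-b|) = |a|+|b|$, which says that $\varphi$ is an isometry from $(\mathbb{Z}^{2}, \ell_{1})$ to $(\mathbb{Z}^{2}, \ell_{\infty})$; consequently the minimum edge-count of a king-graph Steiner tree for $K$ equals the $\ell_{\infty}$ Steiner minimal tree length of $K$, which I will match to the rectilinear optimum for $P$.

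For the forward direction I would take a rectilinear Steiner tree for $P$ of total length $\ell$; a standard Hanan-grid argument lets me assume it lies on the integer grid, so after subdivision it consists of $\ell$ unit horizontal and vertical segments. The map $\varphi$ sends a unit horizontal step to the king move $(u,v) \mapsto (u+1, v+1)$ and a unit vertical step to $(u,v) \mapsto (u+1, v-1)$, so it carries this tree to a connected subgraph of $G$ with exactly $\ell$ (diagonal) edges containing $K = \varphi(P)$, giving a king-graph Steiner tree of cost at most $\ell$. For the reverse direction I would start from a king-graph Steiner tree for $K$ with $E \le \ell$ edges and push it back through $\varphi^{-1}(u,v) = \bigl(\tfrac{u+v}{2}, \tfrac{u-v}{2}\bigr)$. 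Each king edge maps to a plane segment of $\ell_{1}$-length exactly $1$: diagonal king edges become axis-aligned unit segments, while horizontal and vertical king edges become diagonal segments of $\ell_{1}$-length $1$. Replacing each diagonal image by an $L$-shaped detour through a fresh Steiner point of the same $\ell_{1}$-length yields a connected rectilinear network containing $P$ of total length $E$, whose spanning tree is a rectilinear Steiner tree for $P$ of length at most $E \le \ell$. Combining the two directions shows the king-graph Steiner optimum for $K$ equals the rectilinear Steiner optimum for $P$.

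The step I expect to be the crux is this reverse direction, specifically the treatment of the horizontal and vertical king edges: unlike the diagonal edges, they do \emph{not} map to valid rectilinear edges, so the naive image of a king-graph solution need not be a rectilinear tree. The resolution is that the \textsc{Rectilinear Steiner Tree} formulation permits Steiner points anywhere in the plane, not only on the integer lattice, so the $L$-shaped detour is legitimate and is length-preserving in the $\ell_{1}$ metric; equivalently, one must check that allowing the ``odd-sublattice'' intermediate king-cells never beats a solution using only diagonal edges. Once this length-preservation is verified edge-by-edge, the equality of optima—and hence the NP-completeness of \textsc{Steiner Tree} on king's graphs—follows.
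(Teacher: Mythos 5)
Your proposal is correct and follows essentially the same route as the paper: a $45^{\circ}$ rotation embedding $P$ into the diagonal sublattice of a polynomial-size king's graph, Hanan's theorem for the forward direction, and the pull-back with $\ell_{1}$-length-preserving L-shaped detours replacing the images of horizontal/vertical king edges for the reverse direction. The crux you flag (that non-diagonal king edges do not map to rectilinear segments and must be rerouted at equal $\ell_{1}$ cost) is exactly the point the paper's proof handles the same way.
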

\begin{proof}
The problem clearly belongs to NP\@.
We prove the NP-hardness by giving a reduction from \textsc{Rectilinear Steiner Tree}.
From an instance $\langle P, \ell \rangle$ of \textsc{Rectilinear Steiner Tree},
we construct an instance $\langle G, K, \ell \rangle$ of \textsc{Steiner Tree}.
Let $D$ be the maximum coordinate of the points in $P$.
Let $G$ be the $(2D - 1) \times (2D - 1)$ king's graph.
For each point $(x, y) \in \{1, 2, \dots, D\}^2$, let $g((x, y)) = (x+y-1, -x+y+D)$.
We set $K = \{g(p) \mid p \in P\}$.
That is, we rotate the original instance $45$ degrees and embed it into 
the $(2D-1) \times (2D-1)$ king's graph (see \cref{fig:steiner-reduction-grid-kings}).
Note that the $(2D-1) \times (2D-1)$ king's graph contains the $D \times D$ grid as a subgraph.
We show that $\langle P, \ell \rangle$ is a yes-instance if and only if so is $\langle G, K, \ell \rangle$.
\begin{figure}[tbh]
  \centering
  \includegraphics[scale=0.7]{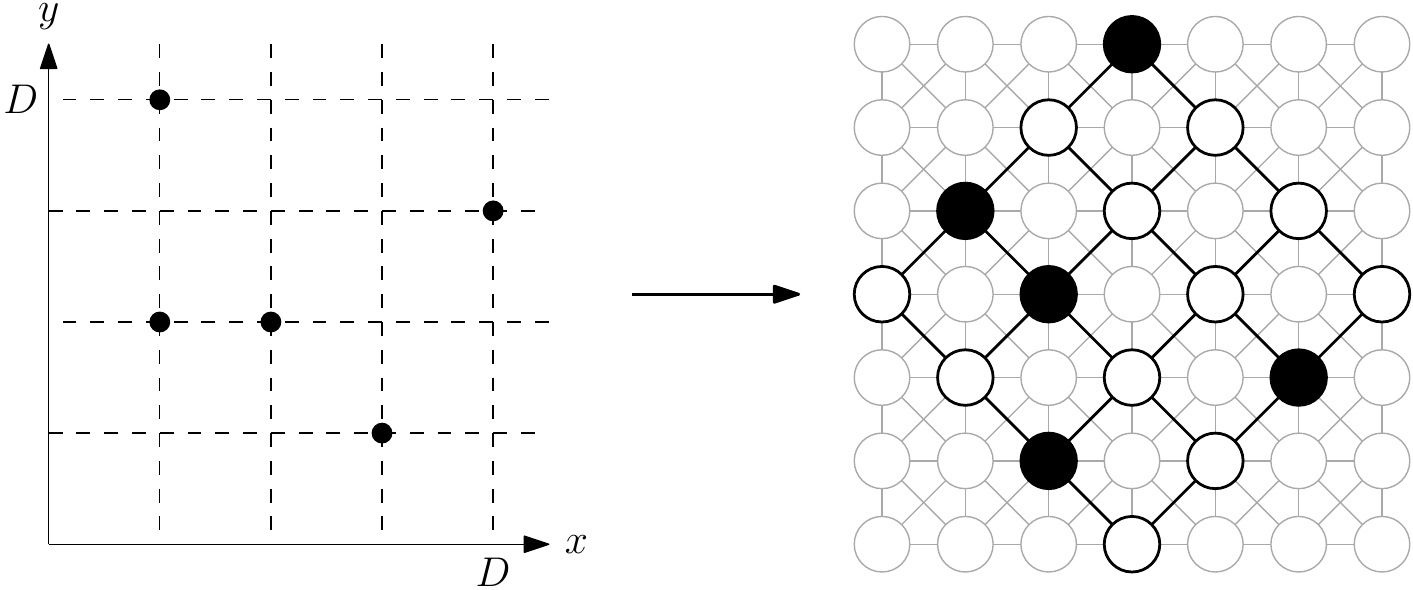}
  \caption{The reduction to \textsc{Steiner Tree} on king's graphs.}
  \label{fig:steiner-reduction-grid-kings}
\end{figure}

\paragraph{The only-if direction.}
Assume that $\langle P, \ell \rangle$ is a yes-instance of \textsc{Rectilinear Steiner Tree}.
Hanan~\cite{Hanan66} showed that there exists a minimum rectilinear Steiner tree for $P$
such that all edges are on a grid formed by taking the horizontal and vertical lines through each point in $P$.
This implies that 
by rotating $45$ degrees and scaling by a factor of $\sqrt{2}$,
we can embed $T$ into the $D \times D$ grid contained in $G$
and obtain a Steiner tree for $K$ with length at most $\ell$.
See \cref{fig:steiner-reduction-grid-kings_2}.
\begin{figure}[tbh]
  \centering
  \includegraphics[scale=0.7]{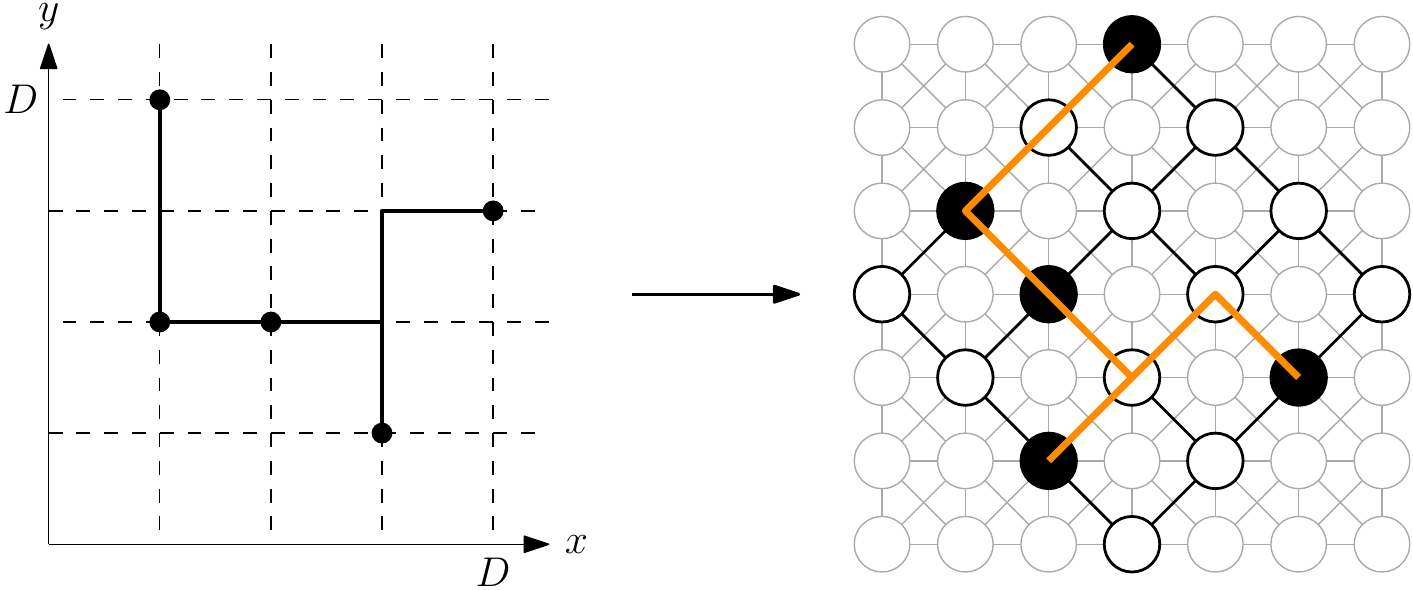}
  \caption{Embedding a rectilinear Steiner tree in the plane into a king's graph.}
  \label{fig:steiner-reduction-grid-kings_2}
\end{figure}

\paragraph{The if direction.}
Let $T$ be a Steiner tree for $K$ with length at most $\ell$.
By rotating $T$ $45$ degrees and scaling by a factor of $1 / \sqrt{2}$,
we obtain a (not necessarily rectilinear) tree $T'$ in the plane that includes all points in $P$
(see \cref{fig:steiner-reduction-from-kings-grid-restoration}).

\begin{figure}[tbh]
  \centering
  \includegraphics[scale=0.7]{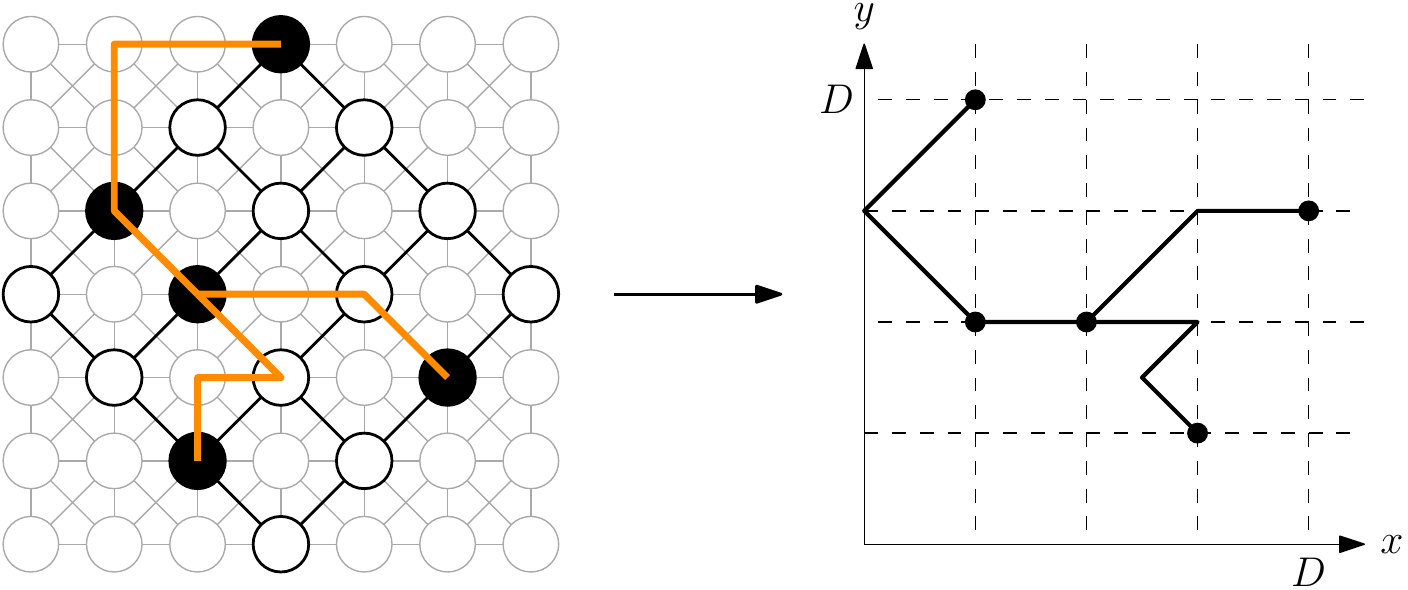}
  \caption{Embedding a Steiner tree of a king's graph into the plane.}
  \label{fig:steiner-reduction-from-kings-grid-restoration}
\end{figure}

We now replace the non-rectilinear edges in $T'$ with rectilinear edges, while maintaining the connectivity, as follows.
We partition each non-rectilinear edges in $T'$ into segments of length $1/\sqrt{2}$ 
(see \cref{fig:steiner-reduction-from-kings-grid-restoration-2} (left)).
Observe that each of such segments takes a form of $\{(x, y), (x+1/2, y+1/2)\}$ or $\{(x, y), (x+1/2, y-1/2)\}$.

We replace each segment $\{(x, y), (x+1/2, y+1/2)\}$ with two segments $\{(x, y), (x + 1/2, y)\}$ and $\{(x + 1/2, y), (x + 1/2, y + 1/2)\}$.
Similarly, we replace $\{(x, y), (x+1/2, y-1/2)\}$ with $\{(x, y), (x + 1/2, y)\}$ and $\{(x + 1/2, y), (x + 1/2, y - 1/2)\}$.
See \cref{fig:steiner-reduction-from-kings-grid-restoration-2} (right).
After the replacements, we take a spanning tree, which is a rectilinear Steiner tree for $P$, and call it $T''$.
\begin{figure}[tbh]
  \centering
  \includegraphics[scale=0.7]{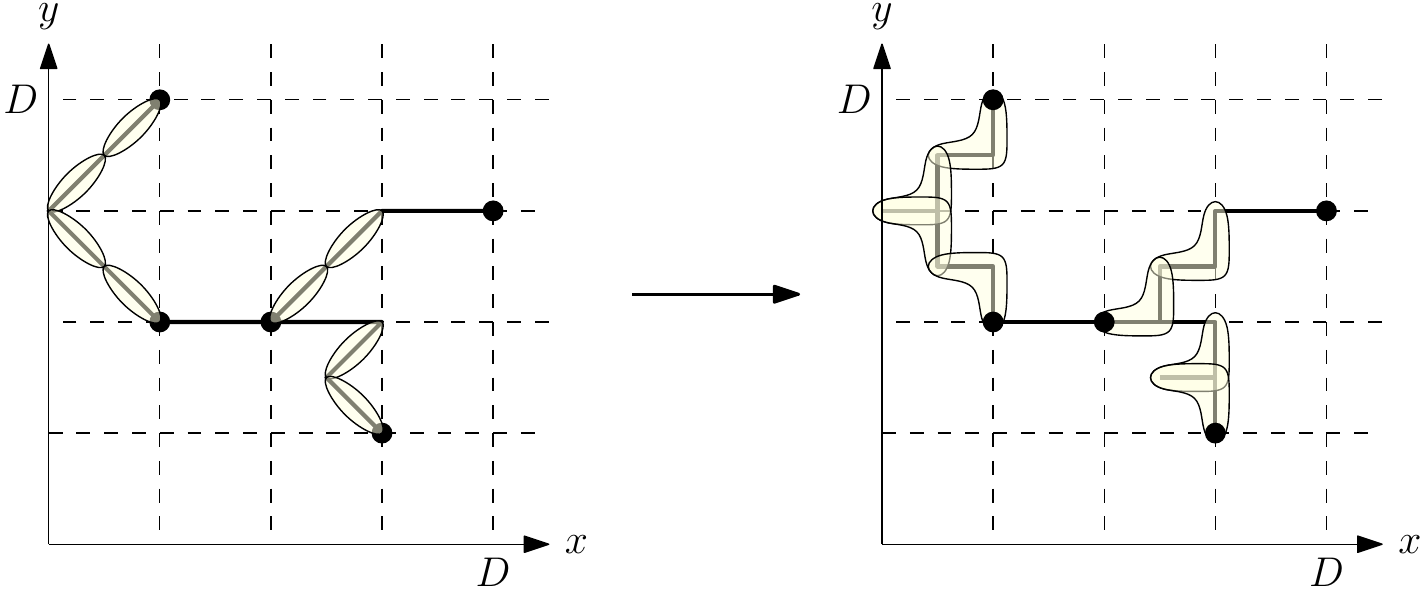}
  \caption{Replacing non-rectilinear edges.}
  \label{fig:steiner-reduction-from-kings-grid-restoration-2}
\end{figure}

Now we prove that the length of the rectilinear Steiner tree obtained above is at most $\ell$.
If an edge of $T$ is between vertices of the same parity, 
then the edge corresponds to a unit-length rectilinear segment in $T'$ (and thus in $T''$ as well).
Otherwise, the edge corresponds to two segments of length $1/2$ in $T''$.
Therefore, the length of the rectilinear Steiner tree is at most $|E(T)|$ ($\le \ell$).
\end{proof}

We now show the hardness of \sts{} with the colorful condition on king's graphs.
Although the proof is similar to the one for grids,
the presence of diagonal edges makes it a little more complicated.

\begin{theorem}
\label{thm:kings-pd-all-color-different-np-complete}
\sts{} with the colorful condition is NP-complete on king's graphs.
\end{theorem}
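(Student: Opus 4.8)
The plan is to reduce from \textsc{Rectilinear Steiner Tree}, combining the color-rotation gadget and scaling factor of the grid reduction (\cref{thm:grid-pd-all-color-different-np-complete}) with the $45^{\circ}$-rotation embedding used for \textsc{Steiner Tree} on king's graphs (\cref{thm:kings-steiner-np-complete}); the rotation is what keeps the diagonal edges under control. Given an instance $\langle P, \ell \rangle$ with maximum coordinate $D$, I set $c = \Theta(|P|)$, take $H$ to be a large enough king's graph, and place each point $p \in P$ at the king's vertex $c \cdot g(p)$, where $g$ is the rotation of \cref{thm:kings-steiner-np-complete}. I let $f$ be an injective coloring of $H$ and let $f'$ be obtained from $f$ by performing, around each embedded point, the same local color rotation as in the grid reduction; I set $k = (2\ell+1)c$. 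Under this embedding a unit rectilinear edge becomes a diagonal path of $c$ king's edges, so a rectilinear Steiner tree of length $\ell$ corresponds to a king's subgraph with $\ell c$ edges.

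For the only-if direction I argue exactly as in the grid case. From a rectilinear Steiner tree $T$ of length at most $\ell$ I take its image in $H$, a tree with at most $\ell c$ king's edges, traverse it by an Euler tour that uses each edge twice, and splice in at each embedded point the short subwalk that realizes the required local color rotation. The total length is $2\ell c + O(|P|) \le k$ by the choice of $c$, and the swapping sequence along the resulting walk transforms $f$ into $f'$.

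The if direction is where the diagonal edges cause trouble and is the main obstacle. From a swapping sequence of length at most $k$ with corresponding walk $W$, I first delete the subwalks that carry out the local color rotations: each rotation is forced because $f \ne f'$ at the embedded points, so $W$ must enter each of them, and deleting the corresponding cyclic subwalks (as in the proof of \cref{thm:colorful-gen}) leaves a walk $W'$ whose swapping sequence sends $f$ to $f$ itself. Since $f$ is injective, \cref{lem:non-changing-walk} gives $|V(W')| \le (|W'|+1)/2$, so $W'$ visits only about $\ell c$ distinct king's vertices. These induce a connected subgraph of $H$ containing every embedded point, and I take a spanning tree $T^{\ast}$ of it with at most $\ell c$ king's edges.

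It remains to turn $T^{\ast}$ into a short rectilinear Steiner tree for $P$, and here I reuse the segment-replacement argument of \cref{thm:kings-steiner-np-complete}: rotating back by $45^{\circ}$ and rescaling, each diagonal king's edge becomes one rectilinear segment and each horizontal or vertical king's edge becomes an L-shaped pair of half-length segments, so every king's edge contributes exactly $1/c$ to the rectilinear length and $T^{\ast}$ yields a rectilinear Steiner tree of length at most $\ell$. The crucial point---and the reason I use the rotation rather than an axis-aligned scaling---is that this charging lets the factor $2$ from \cref{lem:non-changing-walk} cancel the double-traversal factor in $k$ with no extra loss, whereas an axis-aligned embedding would charge a diagonal edge two rectilinear units and lose a factor of~$2$. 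I expect the most delicate step to be the exact bookkeeping of the color-rotation gadgets in the presence of the triangle shortcuts created by diagonal edges, so that the single bound $k = (2\ell+1)c$ is simultaneously sufficient for the only-if direction and forcing for the if direction; this is the king's-graph analogue of the careful counting in the Demaine--Rudoy reduction~\cite{DemaineR18}.
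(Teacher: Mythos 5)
Your construction (the $45^{\circ}$-rotated embedding scaled by $c$, the injective coloring with a local color rotation around each embedded terminal, and the budget $k=(2\ell+1)c$) and your only-if direction coincide with the paper's, and your charging of king's edges to rectilinear length via the segment-replacement of \cref{thm:kings-steiner-np-complete} is also the paper's. The problem is the if direction. You propose to excise ``the cyclic subwalks that carry out the local color rotations'' so that the remaining walk $W'$ is an identity walk, and then apply \cref{lem:non-changing-walk}. That excision step is exactly what is \emph{not} available here. In \cref{thm:colorful-gen} the gadget is a pendant cycle attached at a cut vertex, so every interaction of the walk with the gadget is a maximal subwalk confined to the cycle, entering and leaving through the cut vertex; those subwalks are well defined and can be deleted. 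In the king's graph the four recolored vertices around $g(p)$ are ordinary interior vertices with their full neighborhoods: a length-$k$ walk realizing $f\to f'$ may visit $(g_x(p)+1,g_y(p))$ once in passing, carry its token far away, and restore the required colors only through visits interleaved with excursions to distant parts of the graph. There is no canonical ``rotation subwalk'' to delete, and no guarantee that deleting anything local leaves a walk whose swapping sequence is the identity. So the hypothesis of \cref{lem:non-changing-walk} is never established, and the bound $|V(W')|\lesssim \ell c$ does not follow.

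The paper closes this gap by a different counting argument that never produces an identity walk. Taking a \emph{minimum}-length swapping sequence, it lets $R$ be the set of visited vertices and $R'=\{v\in R\mid f(v)=f'(v)\}$, observes that $|R\setminus R'|=4|P|$ exactly, and uses the fact (from the Demaine--Rudoy analysis~\cite{DemaineR18}) that under an injective coloring every visited vertex whose color is unchanged must be visited at least twice in a minimum sequence --- if it were visited once, the unique token $f(v)$ could never return to $v$. This gives $|R'|\le (k+1)/2$ and hence $|R|<(\ell+1)c$; since $R$ induces a connected subgraph containing a fixed neighbor of each $g(p)$, a spanning tree of it has at most $(\ell+1)c-1$ edges, and integrality of the minimum rectilinear Steiner length finishes the argument. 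If you replace your excision-plus-\cref{lem:non-changing-walk} step with this ``unchanged vertices are visited twice'' count, the rest of your write-up goes through; as written, the if direction does not.
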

\begin{proof}
We prove the NP-hardness by a reduction from \textsc{Rectilinear Steiner Tree}.
From an instance $\langle P, \ell \rangle$ of \textsc{Rectilinear Steiner Tree},
we construct an instance $\langle G, f, f', k \rangle$ of \textsc{Steiner Tree} that satisfies the colorful condition.

Let $c = 12|P|$ and $D$ be the maximum coordinate of the points in $P$.
We set $k = (2 \ell + 1) c$.
Let $G$ be the $((2D+1)c) \times ((2D+1)c)$ king's graph.
For each point $p = (x, y) \in \{1, 2, \dots, D\}^2$, let 
$g(p) = (g_{x}(p), g_{y}(p)) = (c(x+y), c(-x+y+D+1))$.
Let $f$ be an injective coloring of $G$, and $f'$ be the coloring obtained from $f$ by rotating the colors 
of the four vertices $(g_x(p) \pm  1, g_y(p)), (g_x(p), g_y(p) \pm 1)$ for each $p \in P$ as shown in \cref{fig:kings-pd-reduction-rotation}.
We show that $\langle P, \ell \rangle$ is a yes-instance 
if and only if so is $\langle G, f, f', k \rangle$.

\begin{figure}[tbh]
  \centering
  \includegraphics[scale=0.7]{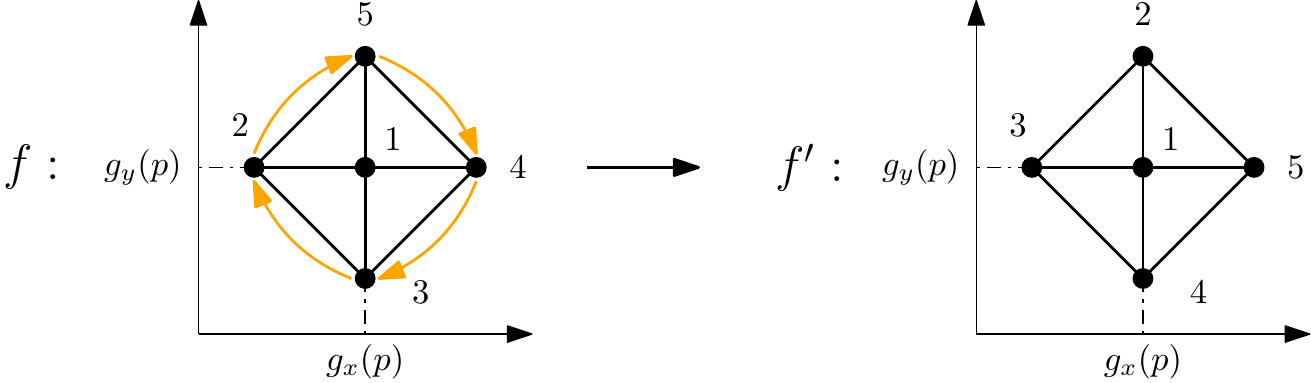}
  \caption{Rotating the colors around the vertex $g(p)$.}
  \label{fig:kings-pd-reduction-rotation}
\end{figure}

\paragraph{The only-if direction.}

Assume that $\langle P, \ell \rangle$ is a yes-instance of \textsc{Rectilinear Steiner Tree}.
By the same discussion as in \cref{thm:kings-steiner-np-complete}, 
there exists a Steiner tree for $P$ with at most $\ell$ edges in the $D \times D$ grid.
Let $T$ be such a tree.
Now, by scaling $T$ by a factor of $\sqrt{2} \cdot c$ and then rotating $45$ degrees, 
we obtain a Steiner tree $T'$ for $g(P)$ in $G$ such that the number of edges is at most $c \ell$.
That is, if $T$ contains an edge $\{u,v\}$,
then $T'$ contains the unique $g(u)$--$g(v)$ path of length $c$ in $G$ that corresponds to $\{u,v\}$.
Note that since $T$ is rectilinear, $T'$ only contains diagonal edges.

From $T'$, we construct a swapping sequence from $f$ to $f'$ with length at most $k$.
Let $p_{1}$ be an arbitrary vertex in $P$, and 
let $W$ be a walk from $g(p_{1})$ in $T'$ that contains each edge of $T'$ exactly twice.
As observed in~\cite{DemaineR18}, the moves along such a walk cancel out, and thus
the swapping sequence corresponding to $W$ is a swapping sequence from $f$ to $f$ itself.

We construct a swapping sequence between $f$ and $f'$ by adding some moves to $W$.
For each point $p \in P$, we insert five moves to $W$ just after the moving token visits $g(p)$ for the first time.
Suppose that the current configuration is $f$ and the moving token is at $g(p)$.
Then, moving the moving token as in \cref{fig:kings-pd-reduction-rotation-swapping-sequence}, we can match the colors of tokens at four vertices $(g_x(p) \pm  1, g_y(p) \pm 1)$ to $f'$.
Note that this movement does not affect the colors of other tokens. 
\begin{figure}[htb]
  \centering
  \includegraphics[width=\textwidth]{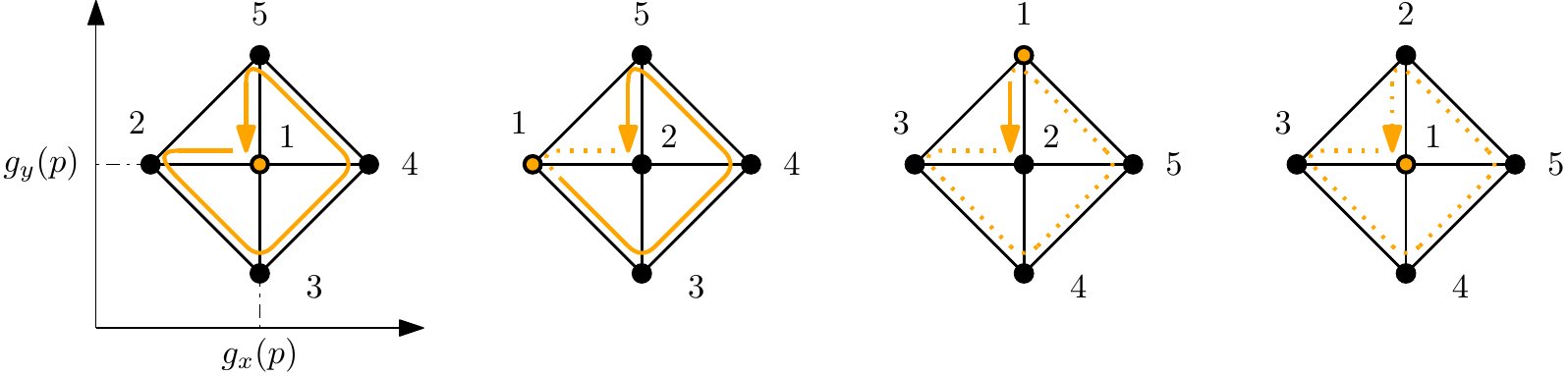}
  \caption{A movement to match the colors of tokens around $g(p)$ to $f'$.}
  \label{fig:kings-pd-reduction-rotation-swapping-sequence}
\end{figure}
Since $W$ contains all edges of $T'$, $W$ contains $g(p)$ for every $p \in P$.
In addition, the vertices in $T'$ have the same parity
since it contains only diagonal edges.
Thus, the four vertices $(g_x(p) \pm  1, g_y(p)), (g_x(p), g_y(p) \pm 1)$ do not appear in $W$.
This implies that the four vertices are touched by the inserted moves only.
Therefore, by inserting the five moves into $W$ for each $p \in P$, we obtain a swapping sequence between $f$ and $f'$.

Now we prove that the length is at most $k$.
Since $T'$ has at most $c \ell$ edges and $W$ contains each of them exactly twice, $|W| \le 2c \ell$ holds.
Since we add five moves to $W$ for each $p \in P$,
the length of the resultant walk is $|W| + 5|P| < 2c \ell + c = k$.

\paragraph{The if direction.}
Assume that $\langle G, f, f', k \rangle$ is a yes-instance of \sts{}.
As observed in the proof of \cref{thm:kings-steiner-np-complete}, 
if there is a Steiner tree for $g(P)$ on $G$ with $L$ edges, 
then there is a rectilinear Steiner tree for $P$ with length at most $L/c$.
Since the length of a minimum rectilinear Steiner tree is an integer~\cite{Hanan66} (see also \cite{DemaineR18}),
it suffices to show that there exists a Steiner tree for $g(P)$ with at most $(\ell + 1)c-1$ edges,
or, equivalently with at most $(\ell + 1)c$ vertices.

Let $S$ be a minimum swapping sequence from $f$ to $f'$ with length at most $k = (2\ell+1)c$.
Let $R$ be the set of vertices that the moving token visits in $S$,
and let $R' = \{v \in R \mid f(v) = f'(v)\}$.
If $f(v) \ne f'(v)$ for some $v$, then the moving token has to visit $v$.
Thus, $|R \setminus R'| = |\{v \in V(G) \mid f(v) \ne f'(v)\}| = 4 |P|$.
Since $S$ has the minimum length, each vertex in $R'$ is visited at least twice (see~\cite{DemaineR18}),
and thus $|R'| \le (k +1)/2$.
Therefore, it holds that
$|R| = |R'| + |R \setminus R'| \le (k +1)/2 + 4|P| = ((2 \ell + 1)c +1)/2 + 4|P| < (2 \ell + 1)c/2 + c/2 = (\ell + 1)c$.

Now observe that $R$ contains all $g(p) + (1, 0)$ for each $p \in P$ since $f(g(p) + (1, 0)) \ne f'(g(p) + (1, 0))$.
Therefore, shifting all vertices in $R$ by $(-1, 0)$, we obtain a Steiner tree of $g(P)$ with at most $(l+1)c$ vertices.
\end{proof}

\section{Concluding remarks}
\label{sec:conclusion}

We have studied \sts{} from the view point of restricted graph classes
and shown several positive and negative results.
We note that the complexity of the problem with the colorful condition remained unsettled for split graphs.
\begin{itemize}
  \item Is \sts{} NP-complete on split graphs in the colorful condition?
\end{itemize}

We did not address the approximability and the parameterized complexity of the problem in this paper,
which would be interesting research topics.

\paragraph{Approximation}
Since the problem is intractable in general, it would be interesting to ask if it admits an approximation.
It is known that for some constant $c$,
it is NP-hard to find a $c$-approximation solution for \sts~\cite{YamanakaDHKNOSS19}.
However, we can still hope for an approximation algorithm with a slightly worse approximation guarantee.
\begin{itemize}
  \item Does \sts{} admit a constant-factor approximation?
\end{itemize}
Note that the non-sequential variant, \textsc{Token Swapping},
admits a 4-approximation for general graphs~\cite{MiltzowNORTU16}.

\paragraph{Parameterized complexity}
It would be interesting to study the parameterized complexity of the problem (see~\cite{CyganFKLMPPS15}).
\Cref{lem:sub-pd-sum} and the $O(n^{3})$ upper bound of the minimum length of a swapping sequence~\cite{YamanakaDHKNOSS19}
together imply that \sts{} is fixed-parameter tractable parameterized by the maximum size of a biconnected component.
This parameter is an upper bound of \emph{treewidth}, the most well-studied structural graph parameter.\footnote{%
We omit formal definitions of the graph parameters mentioned in this section.
See e.g., \cite{HlinenyOSG08} for the their definitions and basic properties.}
Furthermore, one can see that block-cactus graphs have constant \emph{clique-width} (a generalization of treewidth).
Hence, it would be natural to ask the complexity of \sts{} parameterized by treewidth or clique-width,
although it looks quite challenging to generalize our algorithm to such settings.
Actually, we can observe that the problem is intractable when clique-width is the parameter.
It is easy to see that attaching a triangle to each vertex may increase clique-width only by a constant.
Since $(s,t)$-\textsc{Hamiltonian Path} is W[1]-hard parameterized by clique-width~\cite{FominGLS10},\footnote{%
Fomin et al.~\cite{FominGLS10} showed the W[1]-hardness of \textsc{Hamiltonian Cycle},
which can be easily translated to the W[1]-hardness of $(s,t)$-\textsc{Hamiltonian Path}
by adding a false-twin of an arbitrary vertex and call them $s$ and $t$.}
 the same argument in the proof of \cref{thm:few-color-gen} implies the following.
\begin{corollary}
\sts{} parameterized by clique-width is W[1]-hard. 
\end{corollary}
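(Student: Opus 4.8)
The plan is to recast the NP-hardness reduction of \cref{thm:few-color-gen} as a parameterized reduction with clique-width as the parameter. Concretely, I would start from $(s,t)$-\textsc{Hamiltonian Path}, which is W[1]-hard parameterized by clique-width (Fomin et al.~\cite{FominGLS10}, via the footnoted translation from \textsc{Hamiltonian Cycle}), and apply the construction from \cref{thm:few-color-gen} with the cycle length fixed to $q = 3$, i.e., attaching a triangle at each vertex. This produces an instance $\langle H, f, f', k \rangle$ of \sts{} using only two colors, and \cref{thm:few-color-gen} already certifies that $\langle G, s, t \rangle$ is a yes-instance of $(s,t)$-\textsc{Hamiltonian Path} if and only if $\langle H, f, f', k \rangle$ is a yes-instance of \sts{}. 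So the correctness of the equivalence is inherited for free; the only new obligation is to verify the two formal requirements of a parameterized reduction.

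The running-time requirement is immediate: $H$ has $O(|V(G)|)$ vertices and edges, and the colorings $f, f'$ together with the budget $k$ are computed directly, so the reduction runs in polynomial time. The substantive requirement is that the new parameter be bounded by a function of the old one, i.e., that $\mathrm{cw}(H) \le \mathrm{cw}(G) + O(1)$. To establish this, I would take an optimal clique-width expression for $G$ and augment it so that, at the moment a vertex $v$ first receives its label, we also introduce the two additional triangle vertices attached at $v$, join them to $v$ and to each other using a bounded number of fresh labels, and then relabel all three vertices of the completed gadget to a single inert ``finished'' label that takes part in no further join operations. Because each triangle is built and sealed off locally, the auxiliary labels are recycled between successive attachments, so the whole expression uses only a constant number of labels beyond those of $G$.

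The main obstacle is precisely this clique-width accounting: one must ensure that the labels used to construct the triangles are strictly local, created and discarded per vertex, so that they do not accumulate over the $|V(G)|$ attachments and inflate the label count by a non-constant amount. Once the additive-constant bound is secured, combining the biconditional from \cref{thm:few-color-gen} with the W[1]-hardness of the source problem yields the claimed W[1]-hardness of \sts{} parameterized by clique-width.
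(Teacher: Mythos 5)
Your proposal matches the paper's argument exactly: the paper derives this corollary by applying the reduction of \cref{thm:few-color-gen} with $q=3$ to the W[1]-hardness of $(s,t)$-\textsc{Hamiltonian Path} parameterized by clique-width~\cite{FominGLS10}, noting only that attaching a triangle at each vertex increases clique-width by at most an additive constant. One small correction to your clique-width accounting: after building a triangle gadget you should send only the two \emph{new} vertices to the inert label, while $v$ must end up carrying its original label from $G$'s expression so that the later join operations of that expression still create the edges of $G$ incident to $v$.
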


Given the discussion above, we would like to ask a few questions about the complexity of \sts{}.
\begin{itemize}
  \item Is it fixed-parameter tractable or XP parameterized by treewidth?
  How about upper bounds of treewidth such as vertex cover number?
  \item Is it XP parameterized by clique-width?
  Is it polynomial-time solvable for some graph classes of constant clique-width (e.g., cographs)?
\end{itemize}

\bibliographystyle{elsarticle-num}
\bibliography{ref}

\end{document}